\journal{Journal of \LaTeX\ Templates}
\newtheorem{corollary}{Corollary}
\newtheorem{proposition}{Proposition}
\newtheorem{remark}{Remark}
\newtheorem{definition}{Definition}
\newtheorem{lemma}{Lemma}
\DeclareMathOperator*{\argmin}{Arg\,min} 
\DeclareMathOperator*{\argmax}{Arg\,max} 
\DeclareMathOperator*{\inter}{\bigcap_{\tilde{Z}}} 
\DeclareMathOperator*{\excl}{\setminus_{\tilde{Z}}}
\DeclareMathOperator*{\interR}{\bigcap_{R} }
\DeclareMathOperator*{\exclR}{\setminus_{R}}
\DeclareMathOperator*{\interS}{\bigcap_{S} }
\DeclareMathOperator*{\exclS}{\setminus_{S}}
\definecolor{amaranth}{rgb}{0.9, 0.17, 0.31}
\DeclareMathOperator*{\backtracking}{\mathtt{backtracking}}
\DeclareMathOperator*{\getconvexsets}{\mathtt{getPastFutureSets}}
\DeclareMathOperator*{\update}{\mathtt{updateZone}}
\DeclareMathOperator*{\select}{\mathtt{select}}
\DeclareMathOperator*{\halfspace}{\mathtt{half-space}}
\DeclareMathOperator*{\BestCostBestTau}{\mathtt{bestCost\&Tau}}
\DeclareMathOperator*{\MultiD}{\mathtt{GeomFPOP}}
\numberwithin{equation}{section}
\theoremstyle{plain}
\begin{document}
\begin{frontmatter}

\title{Geometric-Based Pruning Rules For Change Point Detection in Multiple Independent Time Series.}

\author[evry]{Liudmila Pishchagina}
\author[evry,IPS2]{Guillem Rigaill}
\author[evry]{Vincent Runge}
\address[evry]{Universit\'e Paris-Saclay, CNRS, Univ Evry, Laboratoire de Math\'ematiques et Mod\'elisation d'Evry, 91037, Evry-Courcouronnes, France.}
\address[IPS2]{Universit\'e Paris-Saclay, CNRS, INRAE, Univ Evry, Institute of Plant Sciences Paris-Saclay (IPS2), Orsay, France.}

\begin{abstract}
We address the challenge of identifying multiple change points in a group of independent time series, assuming these change points occur simultaneously in all series and their number is unknown. The search for the best segmentation can be expressed as a minimization problem over a given cost function. We focus on dynamic programming algorithms that solve this problem exactly. When the number of changes is proportional to data length, an inequality-based pruning rule encoded in the PELT algorithm leads to a linear time complexity. Another type of pruning, called functional pruning, gives a close-to-linear time complexity whatever the number of changes, but only for the analysis of univariate time series. 
 
 We propose a few extensions of functional pruning for multiple independent time series based on the use of simple geometric shapes (balls and hyperrectangles). We focus on the Gaussian case, but some of our rules can be easily extended to the exponential family. In a simulation study we compare the computational efficiency of different geometric-based pruning rules. We show that for a small number of time series some of them ran significantly faster than inequality-based approaches in particular when the underlying number of changes is small compared to the data length.
\end{abstract}

\begin{keyword}
Multivariate time series \sep multiple change point detection \sep dynamic programming \sep functional pruning \sep computational geometry 
\end{keyword}

\end{frontmatter}

\section*{Introduction}

A National Research Council report \cite{NRCreport2013} has identified change point detection as one of the ''inferential giants'' in massive data analysis. 
Detecting change points, either a posteriori or online, is important in areas as diverse as
 bioinformatics \cite{olshen2004circular, Picard2005}, econometrics \cite{bai2003computation, Aue_monitoring}, medicine \cite{Bosc2003, Staudacher2005ANM, Malladi2013OnlineBC}, climate and oceanography \cite{Reeves2007,DucrRobitaille2003, Killick, Naoki2010}, finance \cite{Andreou, Fryzlewicz_2014}, autonomous
driving \cite{galceran2017multipolicy}, entertainment \cite{Rybach, Radke, Davis2006}, computer vision \cite{ranganathan2012pliss} or neuroscience \cite{jewell2020fast}.
The most common and prototypical change point detection problem is that of detecting changes in mean of a univariate Gaussian signal and a large number of approaches have been proposed to perform this task (see among many others \cite{Yao,Lebarbier2005,harchaoui2010multiple,Frick2013,fryzlewicz2020detecting} and the reviews \cite{truong2020selective,aminikhanghahi2017survey}).
\paragraph{Penalized cost methods} Some of these methods optimize a penalized cost function (see for example \cite{Lebarbier2005,Auger,jackson2005algorithm,Killick,Rigaill2010,Maidstone}). These methods have good statistical guarantees \cite{Yao,lavielle2000least,Lebarbier2005} and have shown good performances in benchmark simulations \cite{fearnhead2018detecting} and on many applications \cite{lai2005comparative,liehrmann2021increased}. From a computational perspective, they rely on dynamic programming algorithms that are at worst quadratic in the size of the data, $n$. However using inequality-based and functional pruning techniques \cite{Rigaill2010,Killick,Maidstone} the average run times are typically much smaller allowing to process very large profiles ($n> 10^5$) in a matter of seconds or minutes. In detail, for one time series:
\begin{itemize}
    \item if the number of change points is proportional to $n$ both PELT \cite{Killick} (a version of OP which uses inequality-based pruning) and FPOP \cite{Maidstone} (a version of OP which uses functional pruning as in \cite{Rigaill2010}) are on average linear \cite{Killick,Maidstone};

      \item if the number of change points is proportional to $n$ both PELT  (a version of OP which uses inequality-based pruning \cite{Killick}) and FPOP \cite{Maidstone} (a version of OP which uses functional pruning as in \cite{Rigaill2010}) are on average linear  \cite{Killick,Maidstone};

    \item if the number of change points is fixed, FPOP is quasi-linear (on simulations) while PELT is quadratic \cite{Maidstone}.
\end{itemize}

\paragraph{Multivariate extensions} This paper focuses on identifying multiple change points in a multivariate independent time series. We assume that changes occur simultaneously in all dimensions, their number is unknown, and the cost function or log-likelihood of a segment (denoted as $\mathcal C$) can be expressed as a sum across all dimensions $p$. Informally, that is, 

$$
\mathcal C(segment) = \sum_{k=1}^{p} \mathcal C(segment, \hbox{ time series } k)\,.
$$

In this context, the PELT algorithm can easily be extended for multiple time series. However, as for the univariate case, it will be algorithmically efficient only if the number of non-negligible change points is comparable to $n$. In this paper, we study the extension of functional pruning techniques (and more specifically FPOP) to the multivariate case.

At each iteration, FPOP updates the set of parameter values for which a change position $\tau$ is optimal. As soon as this set is empty the change is pruned. For univariate time series, this set is a union of intervals in $\mathbb{R}$. For parametric multivariate models, this set is equal to the intersection and difference of convex sets in $\mathbb{R}^p$ \cite{runge2020finite}. It is typically non-convex, hard to update, and deciding whether it is empty or not is not straightforward.

In this work, we present a new algorithm, called Geometric Functional Pruning Optimal Partitioning (GeomFPOP). The idea of our method consists in approximating the sets that are updated at each iteration of FPOP using simpler geometric shapes. Their simplicity of description and simple updating allow for a quick emptiness test.

The paper has the following structure. In Section \ref{sec-changesMulti} we introduce the penalized optimization problem for segmented multivariate time series in case where the number of changes is unknown. We then review the existing pruned dynamic programming methods for solving this problem. We define the geometric problem that occurs when using functional pruning. The new method, called GeomFPOP, is described in Section \ref{sec-GeomFPOP} and based on approximating intersection and exclusion set operators.
In Section \ref{sec-approximation} we introduce two approximation types (sphere-like and rectangle-like) and define the approximation operators for each of them. We then compare in Section \ref{sec-study} the empirical efficiency of GeomFPOP with PELT on simulated data.

\section{Functional Pruning for Multiple Time Series}
\label{sec-changesMulti}

\subsection{Model and Cost}
\label{sec-model}
We consider the problem of change point detection in multiple independent time series of length $n$ and dimension $p$, while assuming simultaneous changes in all univariate time series and an unknown number of changes. Our aim is to partition data into segments, such that in each segment the parameter associated to each time series is constant.
For a time series $y$ we write $y = y_{1:n}=(y_1,\dots, y_n) \in(\mathbb{R}^p)^n$ with $y_i^k$ the $k$-th component of the $p$-dimensional point $y_i\in\mathbb{R}^p$ in position $i$ in vector $y_{1:n}$. We also use the notation $y _{i:j} = (y_i,\dots, y_j)$ to denote points from index $i$ to $j$. If we assume that there are $M$ change points in a time series, this corresponds to time series splits into $M+1$ distinct segments. The data points of each segment $m \in \{1,\dots, M+1\}$ are generated by independent random variables from a multivariate distribution with the segment-specific parameter
$\theta_m = (\theta_m^1,\dots, \theta_m^p)  \in \mathbb{R}^p$. A segmentation with $M$ change points is defined by the vector of integers $\bm{\tau} =(\tau_0 = 0, \tau_1,\dots,\tau_M,\tau_{M+1}=n)$. Segments are given by the sets of indices $\{\tau_i+1,\dots, \tau_{i+1}\}$ with $i$ in $\{0,1,\ldots,M\}$. 

We define the set $S_n^M$ of all possible change point locations related to the segmentation of data points between positions $1$ to $n$ in $M+1$ segments as
$$S_n^M = \{\bm \tau = (\tau_0,\tau_1,\dots,\tau_M, \tau_{M+1}) \in \mathbb{N}^{M+2} | 0=\tau_{0} <\tau_1 < \dots < \tau_M < \tau_{M+1}=n\}\,.$$
For any segmentation $\bm \tau$ in $S_n^M$ we define its size as $|\bm \tau| = M$.
We denote $\mathcal S_n^\infty$ as the set of all possible segmentations of $y_{1:n}$:
$$
\mathcal S_n^\infty = \bigcup_{M<n} S_n^M\,,
$$
and take the convention that $S_n^{\infty-1} = S_n^\infty\,$. In our case, the number of changes $M$ is unknown and has to be estimated. 

Many approaches to detecting change points define a cost function for segmentation using the negative log-likelihood (times two). Here the negative log-likelihood (times two) calculated at the data point $y_j$ is given by function $\theta \mapsto \Omega(\theta,y_j)$, where $\theta = (\theta^1,\dots, \theta^p) \in \mathbb{R}^p$. Over a segment from $i$ to $t$, the parameter remains the same and the segment cost $\mathcal C$ is given by
\begin{equation}
	\label{eq-Cy_it}
		\begin{gathered}
			\mathcal C(y_{i:t}) = \min_{\theta \in \mathbb{R}^p} \sum_{j=i}^{t}\Omega(\theta, y_j) = \min_{\theta \in \mathbb{R}^p} \sum_{j=i}^{t} \left(\sum_{k=1}^{p} \omega(\theta^k, y_j^k)\right)\,,
		\end{gathered}
\end{equation}	
with $\omega$ the atomic likelihood function associated with $\Omega$ for each univariate time series. This decomposition is made possible by the independence hypothesis between univariate time series. Notice that it 
could have been dimension-dependent with a mixture of different distributions (Gauss, Poisson, negative binomial, etc.). In our study, we use the same data model for all dimensions.

In summary, the methodology we propose relies on the assumption that:
\begin{enumerate}
    \item the cost is point additive (see first equality in equation \ref{eq-Cy_it});
    \item the per-point cost $\Omega$ has a simple decomposition: $\Omega(\theta) = \sum_p \omega(\theta^p)$;
    \item the  $\omega$ is convex.
\end{enumerate}

We  get that for any $\bm \tau \in \mathcal S^{\infty}_n$ its segmentation cost is the sum of segment cost functions:
$$
\sum_{i=0}^{|\bm\tau|} \mathcal C\left(y_{(\tau_i+1):\tau_{i+1}}\right)\,.
$$

We consider a penalized version of the segment cost by a penalty $\beta > 0$, as the zero penalty case would lead to segmentation with $n$ segments. The optimal penalized cost associated with our segmentation problem is then defined by
\begin{equation}
\label{eq-Q_n}
	\hat Q_n = \min_{\bm \tau \in S_n^\infty} \sum_{i=0}^{|\bm \tau|} \{\mathcal C(y_{(\tau_{i}+1):\tau_{i+1}})+\beta\}\,.
\end{equation}
The optimal segmentation $\bm \tau$ is obtained by the argminimum in Equation \eqref{eq-Q_n}.

Various penalty forms have been proposed in the literature (\cite{Yao, Killick, Zhang2007, Lebarbier2005, Verzelen2020}). Summing over all segments in Equation \eqref{eq-Q_n}, we end up with a global penalty of the form $\beta (M+1)$. Hence, our model only allows penalties that are proportional to the number of segments (\cite{Yao, Killick}). Penalties such as (\cite{Zhang2007, Lebarbier2005, Verzelen2020}) cannot be considered with our algorithm.

By default, we set the penalty $\beta$ for $p$-variate time series of length $n$ using the Schwarz Information Criterion from \cite{Yao} (calibrated to the $p$ dimensions), as $\beta = 2p\sigma^2 \log{n}$. In practice, if the variance $\sigma^2$ is unknown, it is replaced by an appropriate estimation (e.g. \cite{Hampel1974, Hall1990} as in \cite{Lavielle2001AnAO, liehrmann2023DiffSegR}).

\subsection{Functional Pruning Optimal Partitioning Algorithm}
\label{sec-UpdateRule}
The idea of the Optimal Partitioning (OP) method \cite{jackson2005algorithm} is to search for the last change point defining the last segment in data $y_{1:t}$ at each iteration (with $\hat Q_0 = 0$), which leads to the recursion: 
\begin{equation*}
\label{OP}
\hat Q_{t} = \min_{i\in\{0,\dots,t-1\}}\Big(\hat Q_i + \mathcal C(y_{({i+1}\textcolor{blue}{)}:t})+ \beta \Big)\,.
\end{equation*}

The Pruned Exact Linear Time (PELT) method, introduced in \cite{Killick}, uses inequality-based pruning. It essentially relies on the assumption that splitting a segment in two is always beneficial in terms of cost, this is $C(y_{(i+1):j}) + C(y_{(j+1):t}) \leq C(y_{(i+1):t})$. This assumption is always true in our setting. PELT considers each change point candidate sequentially and decides whether $i$ can be excluded from the set of changepoint candidates if $
\hat Q_i + \mathcal C(y_{(i+1):t}) \ge \hat Q_t\,,$ as  $i$ cannot appear as the optimal change point in future iterations.

\paragraph{Functional description} In the FPOP method we introduce a last segment parameter $\theta = (\theta^1,\dots, \theta^p)$ in $\mathbb{R}^p$ and define a functional cost $\theta \mapsto Q_t(\theta)$ depending on $\theta$, that takes the following form:
$$
Q_t(\theta) = \min_{\bm \tau \in S_t} \Big( \sum_{i=0}^{M-1} \{\mathcal C(y_{(\tau_{i}+1):\tau_{i+1}})+\beta\} + \sum_{j=\tau_{M}+1}^{t}\Omega(\theta, y_j) + \beta \Big)\,.
$$
As explained in \cite{Maidstone}, we can compute the function $Q_{t+1}(\cdot)$ based only on the knowledge of $Q_{t}(\cdot)$ \sout{as} for each integer $t$ from $0$ to $n-1$. We have:
	\begin{equation}
	\label{eq-Q_tpl1}
	 	Q_{t+1}(\theta) = \min \{Q_t(\theta),\hat Q_t +\beta \} + \Omega(\theta, y_{t+1})\,,	
    \end{equation}
for all $\theta \in \mathbb{R}^p$, with $\hat Q_t = \min_\theta Q_t(\theta)$ ($t \ge 1$) and the initialization $Q_0(\theta) = \beta$, $\hat Q_0 = 0$, so that $Q_1(\theta) = \Omega(\theta,y_1)+\beta$.
By looking closely at this relation, we see that each function $Q_t$ is a piece-wise continuous function consisting of at most $t$ different functions  on $\mathbb{R}^p$, denoted  $q^i_t$:
\begin{equation*}
\label{Qq_it}
	\begin{gathered}
		Q_t(\theta) = \min_{i \in \{1,\dots,t \}} \left\{q_t^i(\theta)\right\}\,,	
	\end{gathered}
\end{equation*}
where the $q_t^i$ functions 
are given by explicit formulas: 
\begin{equation*}
\label{eq-q_it}
	\begin{gathered}
		q_t^i(\theta) = \hat Q_{i-1} + \beta + \sum_{j = i}^{t} \Omega(\theta,y_j)\,,\quad\theta \in \mathbb{R}^p\,,\quad i = 1,\dots,t.	
	\end{gathered}
\end{equation*}
and
\begin{equation}
\label{eq-m_im1}
	\hat Q_{i-1} =  \min_{\theta \in \mathbb{R}^p}Q_{i-1}(\theta) = \min_{j \in \{ 1,\dots,i-1\}}\left\{ \min_{\theta \in \mathbb{R}^p}q_{i-1}^j(\theta) \right\}.
\end{equation}
It is important to notice that each $q_t^i$ function is associated with the last change point $i-1$ and the last segment is given by indices from $i$ to $t$. Consequently, the last change point at step $t$ in $y_{1:t}$ is denoted as $ \hat\tau_t$ $( \hat \tau_t \le t-1)$ and is given by
	\begin{equation*}
		\begin{gathered}
			\hat \tau_t = \argmin_{i \in \{1,\dots,t\}} \left\{ \min_{\theta \in \mathbb{R}^p} q_t^i(\theta)\right\}-1.
		\end{gathered}
		\label{tau_t}
	\end{equation*}
\paragraph{Backtracking} Knowing the values of $\hat{\tau}_t$ for all $t=1, \dots, n$, we can always restore the optimal segmentation at time $n$ for $y_{1:n}$. This procedure is called backtracking. The vector $cp(n)$ of ordered change points in the optimal segmentation of $y_{1:n}$ is determined recursively by the relation $cp(n) = (cp(\hat \tau_n), \hat \tau_n)$ with stopping rule $cp(0)=\emptyset$.	
\paragraph{Parameter space description}
Applying functional pruning requires a precise analysis of the recursion (\ref{eq-Q_tpl1}) that depends on the property of the cost function $\Omega$. In what follows we consider three choices based on a Gaussian, Poisson, and negative binomial distribution for data distribution. The exact formulas of these cost functions are given in \ref{sec-AppendixA}.

We denote the set of parameter values for which the function $q^i_t(\cdot)$ is optimal as:
	\begin{equation*}
		\begin{gathered}
	      Z_t^i = \left\{ \theta \in \mathbb{R}^p|  	        Q_t(\theta) = q_{t}^i(\theta) \right\}, \quad i = 1,\dots,t.
		\end{gathered}
		\label{defZ}
	\end{equation*}
also called the \textit{living zone}. The key idea behind functional pruning is that the $Z_t^i$ are nested ($Z_{t+1}^i \subset Z_t^i$) thus as soon as we can prove the emptiness of one set $Z_t^i$, we delete its associated $q_t^i$ function and do not have to consider its minimum anymore at any further iteration (proof in next Section \ref{sec-geometry}). In dimension $p = 1$ this is reasonably easy. In this case, the sets $Z^i_t$ ($i=1,\dots, t$) are unions of intervals and an efficient functional pruning rule is possible by updating a list of these intervals for $Q_{t}$. This approach is implemented in FPOP \cite{Maidstone}.
 
In dimension $p \ge 2$ it is not so easy anymore to keep track of the emptiness of the sets $Z^i_t$. We illustrate the dynamics of the $Z^i_t$ sets in Figure~\ref{fig-Figure1} in the bivariate Gaussian case. Each color is associated with a set $Z_t^i$ (corresponding to a possible change at $i-1$) for $t$ equal $1$ to $5$. This plot shows in particular that sets $Z_t^i$ can be non-convex.
\begin{figure}[ht]
\begin{center}
\includegraphics[width=1\linewidth]{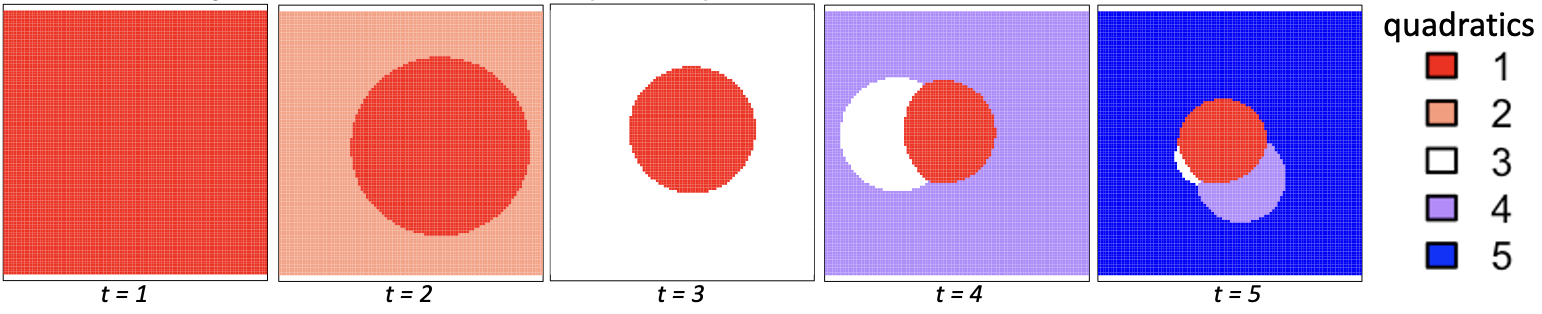}
\caption{The sets $Z^i_t$ over time for the bivariate independent Gaussian model on time series without change $y = \left( (0.29, 1.93), (1.86, -0.02), (0.9, 2.51), (-1.26, 0.91), (1.22, 1.11)   \right)$. From left to right we represent at time $t=1, 2, 3, 4,$ and $5$ the parameter space $(\theta^1, \theta^2).$ Each $Z^i_t$ is represented by a color. The change $1$ associated with  quadratics $2$ is pruned at $t = 3$. Notice that each time sequence of $Z^i_t$ with $i$ fixed is a nested sequence of sets.
}
\label{fig-Figure1}
\end{center}
\end{figure}

\subsection{Geometric Formulation of Functional Pruning}
\label{sec-geometry}
To build an efficient pruning strategy for dimension $p\ge2$ we need to test the emptiness of the sets $Z^i_t$ at each iteration. Note that to get $Z_t^i$ we need to compare the functional cost $q^i_t$ with any other functional cost $q^j_{t}$, $j=1,\dots, t,\, j\neq i$. This leads to the definition of the following sets.
\begin{definition}(S-type set)\label{def-defS}
We define $S$-type set $S^i_j$ using the function $\Omega$ as 
$$  S_j^i = \left\{ \theta \in \mathbb{R}^p \,|\, \sum_{u=i}^{j-1} \Omega(\theta, y_u) \le \hat Q_{j-1}-\hat Q_{i-1}\right\}\,,\hbox{ when } i < j$$ and $S_i^i = \mathbb{R}^p$. We denote the set of all possible S-type sets as $\mathbf{S}$.

To ease some of our calculations, we now introduce some additional notations. For $\theta = (\theta^1,\dots,\theta^p)$ in $\mathbb{R}^p$, $ 1 \le i < j\le n$ we define $p$ univariate functions $\theta^k \mapsto s^k_{ij}(\theta^k)$ associated to the $k$-th time series as
\begin{equation}
\label{eq-setS}
    \begin{aligned}
       &s^k_{ij}(\theta^k) & =&\sum_{u = i}^{j-1} \omega(\theta^k,y_u^k), \quad  k = 1,\dots,p\,.
    \end{aligned}
\end{equation}
We introduce a constant $\Delta_{ij}$ and a function $\theta \mapsto s_{ij}(\theta)$:
\begin{equation}
\label{eq-setSfunc}
    \left\{
    \begin{aligned}
       \Delta_{ij} & =  \,\hat Q_{j-1} - \hat Q_{i-1}\,,\\
       s_{ij}(\theta) & =  \sum_{k=1}^p s^k_{ij}(\theta^k)- \Delta_{ij}\,,
    \end{aligned}
    \right.
\end{equation}
where $\hat Q_{i-1}$ and $\hat Q_{j-1}$ are defined as in (\ref{eq-m_im1}). The sets $S_j^i$ for $i < j$ can thus be written as
\begin{equation}
\label{eq-setSij}
	\begin{gathered}
	    S_j^i = s_{ij}^{-1} (-\infty,0]\,.
	\end{gathered}
\end{equation}
In Figure~\ref{fig-Figure2} we present the level curves for three different parametric models given by $s_{ij}^{-1} (\{w\})$ with $w$ a real number. Each of these curves encloses an S-type set, which, according to the definition of the function $\omega$, is convex. 
\end{definition}
\begin{figure}[ht]
\begin{center}
\includegraphics[width=\textwidth]{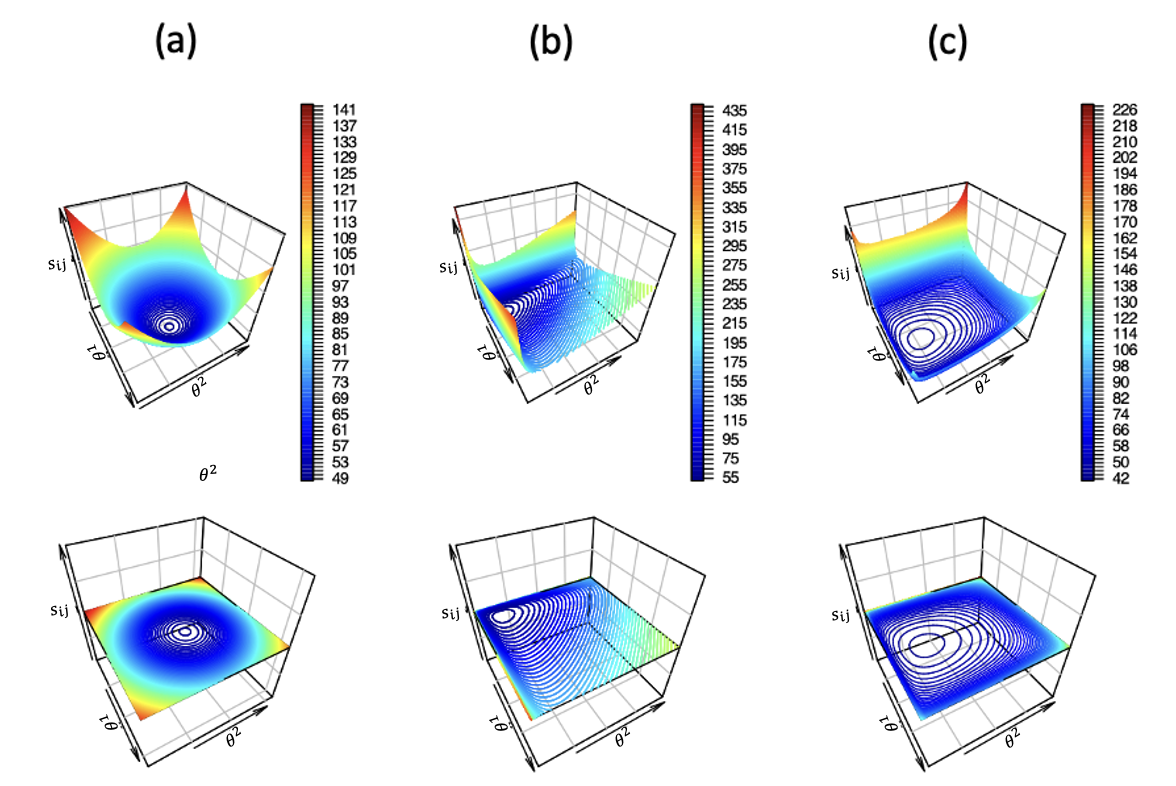}
\caption{Three examples of the level curves of a function $s_{ij}$ for bivariate time series $\{y^1,y^2\}$. We use the following simulations for univariate time series : (a) $y^1\sim \mathcal{N}(0,1)$, $y^2\sim \mathcal{N}(0,1)$, (b) $y^1 \sim \mathcal{P}(1)$, $y^2 \sim \mathcal{P}(3)$, (c) $y^1\sim \mathcal{NB}(0.5,1)$, $y^2\sim \mathcal{NB}(0.8, 1)$. }
\label{fig-Figure2}
\end{center}
\end{figure}

At time $t = 1,\dots, n$ we define the following sets
 associated to the last change point index $i-1$:\\
$\mathtt{past}\,\mathtt{set} \,\mathcal{P}^i$,
\begin{equation*}
    \label{setE}
		\begin{gathered}
			\mathcal{P}^i =\{S_{i}^u,\, u = 1,\dots,i-1\}\,,
		\end{gathered}
	\end{equation*}
$\mathtt{future}\, \mathtt{set} \,\mathcal{F}^i(t)$,
\begin{equation*}
    \label{setI}
		\begin{gathered}
			\mathcal{F}^i(t) =\{S_{v}^i, \, v = i,\dots,t\}\,.
		\end{gathered}
    \end{equation*}
We denote the cardinal of a set $\mathcal{A}$ as $|\mathcal{A}|$. Using these two sets of sets, the $Z^i_t$ have the following description.
\begin{proposition}
\label{proposition_sets}
At iteration $t$, the living zones $Z_t^i$ ($i=1,\dots, t$) are defined by the functional cost $Q_t(\cdot)$, with each of them being formed as the intersection of sets in $\mathcal{F}^i(t)$ excluding the union of sets in $\mathcal{P}^i$.
\begin{equation}
\label{eq-setsZ}
	\begin{gathered}
		Z_t^i = \left(\bigcap_{S\in \mathcal{F}^i(t)}S \right) \setminus \left(\bigcup_{S\in \mathcal{P}^i}S \right)\,,\quad i = 1,\dots,t.
	\end{gathered}
\end{equation}
\end{proposition}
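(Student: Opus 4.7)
The plan is to unfold the definition $Z_t^i = \{\theta \in \mathbb{R}^p : Q_t(\theta) = q_t^i(\theta)\}$ into a family of pairwise inequalities $q_t^i(\theta) \le q_t^j(\theta)$ indexed by $j \in \{1, \dots, t\}$, and to show that each such inequality is equivalent to either membership in a set of $\mathcal{F}^i(t)$ or exclusion from a set of $\mathcal{P}^i$, depending on the sign of $j - i$.

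The main calculation is to substitute the explicit formula $q_t^i(\theta) = \hat Q_{i-1} + \beta + \sum_{u=i}^t \Omega(\theta, y_u)$ into the inequality $q_t^i(\theta) \le q_t^j(\theta)$. For $j > i$, the common tail $\sum_{u=j}^t \Omega(\theta, y_u)$ and the additive $\beta$ cancel, leaving precisely the inequality that defines $S_j^i$ via \eqref{eq-setSij}; as $j$ runs over $\{i, \dots, t\}$ these constraints assemble into $\bigcap_{S \in \mathcal{F}^i(t)} S$, with $S_i^i = \mathbb{R}^p$ contributing a vacuous constraint. For $j < i$, the same algebra produces the reversed inequality $\sum_{u=j}^{i-1} \Omega(\theta, y_u) \ge \hat Q_{i-1} - \hat Q_{j-1}$ for the pair $(j,i)$, which is exactly $\theta \notin S_i^j$; the conjunction over $j \in \{1, \dots, i-1\}$ yields $\theta \notin \bigcup_{S \in \mathcal{P}^i} S$. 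Combining the two cases gives the identity \eqref{eq-setsZ}.

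The whole argument is essentially an algebraic unfolding of the recursion \eqref{eq-Q_tpl1} together with the definitions \eqref{eq-setS}--\eqref{eq-setSfunc}, so there is no genuine obstacle; the only subtlety is bookkeeping at the boundary. Taken literally, $\{Q_t = q_t^i\}$ allows several $Z_t^i$ to share boundary points where ties $q_t^i(\theta) = q_t^j(\theta)$ occur, whereas the right-hand side of \eqref{eq-setsZ} is a closed intersection minus a closed union and therefore assigns each tie to a single index. This is reconciled by the standard tie-breaking convention that, among indices attaining $\min_j q_t^j(\theta)$, the smallest is retained as the last change point; under this convention $\{Z_t^i\}_i$ forms a partition of $\mathbb{R}^p$ and coincides exactly with the geometric description in \eqref{eq-setsZ}.
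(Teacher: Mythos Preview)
Your proof is correct and follows exactly the same approach as the paper: rewrite $\theta\in Z_t^i$ as the family of inequalities $q_t^i(\theta)\le q_t^j(\theta)$ for all $j$, then split according to whether $j>i$ (yielding membership in $S^i_j\in\mathcal{F}^i(t)$) or $j<i$ (yielding exclusion from $S^j_i\in\mathcal{P}^i$). Your treatment is in fact more careful than the paper's two-line proof, since you explicitly note and resolve the boundary issue---that the set difference on the right of \eqref{eq-setsZ} enforces a \emph{strict} inequality for $j<i$ while the literal definition of $Z_t^i$ does not---via a smallest-index tie-breaking convention; the paper simply glosses over this point.
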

\begin{proof}
Based on the definition of the set $Z_t^i$, the proof is straightforward. Parameter value $\theta$ is in  $Z_t^i$ if and only if $q_t^i(\theta) \le q_t^u(\theta)$ for all $u \ne i$; these inequalities define the past set (when $u < i$) and the future set (when $u\ge i$).
\end{proof}
Proposition \ref{proposition_sets} states that regardless of the value of $i$, the living zone $Z_t^i$ is formed through intersection and elimination operations on $t$ S-type sets. Notably, one of these sets, $S_i^i$, always represents the entire space $\mathbb{R}^p$.
\begin{corollary}
\label{col1}
The sequence $\zeta^i = (Z_t^i)_{t\ge i}$ is a nested sequence of sets.
\end{corollary}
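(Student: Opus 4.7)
The plan is to exploit Proposition \ref{proposition_sets} directly: the living zone $Z_t^i$ is described by the past set $\mathcal{P}^i$ and the future set $\mathcal{F}^i(t)$, so I just need to track how each of these depends on $t$.

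First I would observe that the past set $\mathcal{P}^i = \{S_i^u, \, u = 1,\ldots,i-1\}$ does not depend on $t$ at all: each $S_i^u$ is defined through $\Delta_{ui} = \hat Q_{i-1} - \hat Q_{u-1}$ and the function $s_{ui}$, which involve only the data $y_{u:i-1}$ and the optimal values $\hat Q_{u-1}, \hat Q_{i-1}$ computed at earlier iterations. These are constants once $t \ge i$. Hence $\bigcup_{S\in\mathcal{P}^i} S$ is the same set subtracted in both $Z_t^i$ and $Z_{t+1}^i$.

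Next I would note that the future set is strictly incremental in $t$: by definition $\mathcal{F}^i(t+1) = \mathcal{F}^i(t) \cup \{S_{t+1}^i\}$, so
\begin{equation*}
\bigcap_{S\in\mathcal{F}^i(t+1)} S \;=\; \Bigl(\bigcap_{S\in\mathcal{F}^i(t)} S\Bigr) \cap S_{t+1}^i \;\subset\; \bigcap_{S\in\mathcal{F}^i(t)} S\,.
\end{equation*}
Combining these two observations and applying the elementary fact that if $A' \subset A$ then $A' \setminus B \subset A \setminus B$, we get $Z_{t+1}^i \subset Z_t^i$, which is exactly the nestedness claim.

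There is no real obstacle here; the corollary is essentially a bookkeeping consequence of the decomposition provided by Proposition \ref{proposition_sets}. The only point worth double-checking is that the past set is genuinely frozen at $t = i$ (so that nothing in the $\bigcup_{S\in\mathcal{P}^i}$ part can grow with $t$), which follows because the values $\hat Q_0,\ldots,\hat Q_{i-1}$ entering $\mathcal{P}^i$ via \eqref{eq-setSfunc} are already fixed by the recursion at the time $Z_t^i$ is considered.
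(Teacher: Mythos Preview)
Your proof is correct and follows exactly the paper's own reasoning: the paper's justification is the single sentence ``Indeed, $Z_{t+1}^i$ is equal to $Z_t^i$ with an additional intersection in the future set,'' and you have simply spelled this out in full detail (noting that $\mathcal{P}^i$ is $t$-independent and that $\mathcal{F}^i(t+1)=\mathcal{F}^i(t)\cup\{S_{t+1}^i\}$).
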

Indeed, $Z_{t+1}^i$ is equal to $Z_t^i$ with an additional intersection in the future set. Based on Corollary \ref{col1}, as soon as we prove that the set $Z_t^i$ is empty, we delete its associated $q_t^i$ function and, consequently, we can prune the change point $i-1$. In this context, functional and inequality-based pruning have a simple geometric interpretation.
\paragraph{Functional pruning geometry} The position $i-1$ is pruned at step $t$, in $Q_{t}(\cdot),$ if the intersection set of $\bigcap_{S\in \mathcal{F}^i(t)}S$ is covered by the union set $\bigcup_{S\in \mathcal{P}^i}S$.
\paragraph{Inequality-based pruning geometry}
The inequality-based pruning of PELT is equivalent to the geometric rule: position $i-1$ is pruned at step $t$ if the set $S_t^i$ is empty. In that case, the intersection  set $\bigcap_{S\in \mathcal{F}^i(t)}S$ is empty, and therefore $Z_t^i$ is also empty using Equation \eqref{eq-setsZ}. This shows that if a change is pruned using inequality-based pruning it is also pruned using functional pruning. For the dimension $p =1$ this claim was proved in \cite{Maidstone}.

According to Proposition \ref{proposition_sets}, beginning with $Z^i_{i} = \mathbb{R}^p$, the set $Z^i_t$ is derived by iteratively applying two types of operations: intersection with an S-type set $S$ from $\mathcal{F}^i(t)$ or subtraction of an S-type set $S$ from $\mathcal{P}^i$. The construction of set $Z^i_t$ using Proposition \ref{proposition_sets} is illustrated in Figure \ref{fig-Figure3} for a bivariate independent Gaussian case: we have the intersection of three S-type sets and the subtraction of three S-type sets. This simple example highlights that the set $Z^i_t$ is typically non-convex, posing challenge in studying its emptiness.
\begin{figure}[ht]
 	\begin{center} {\includegraphics[width=0.9\linewidth]{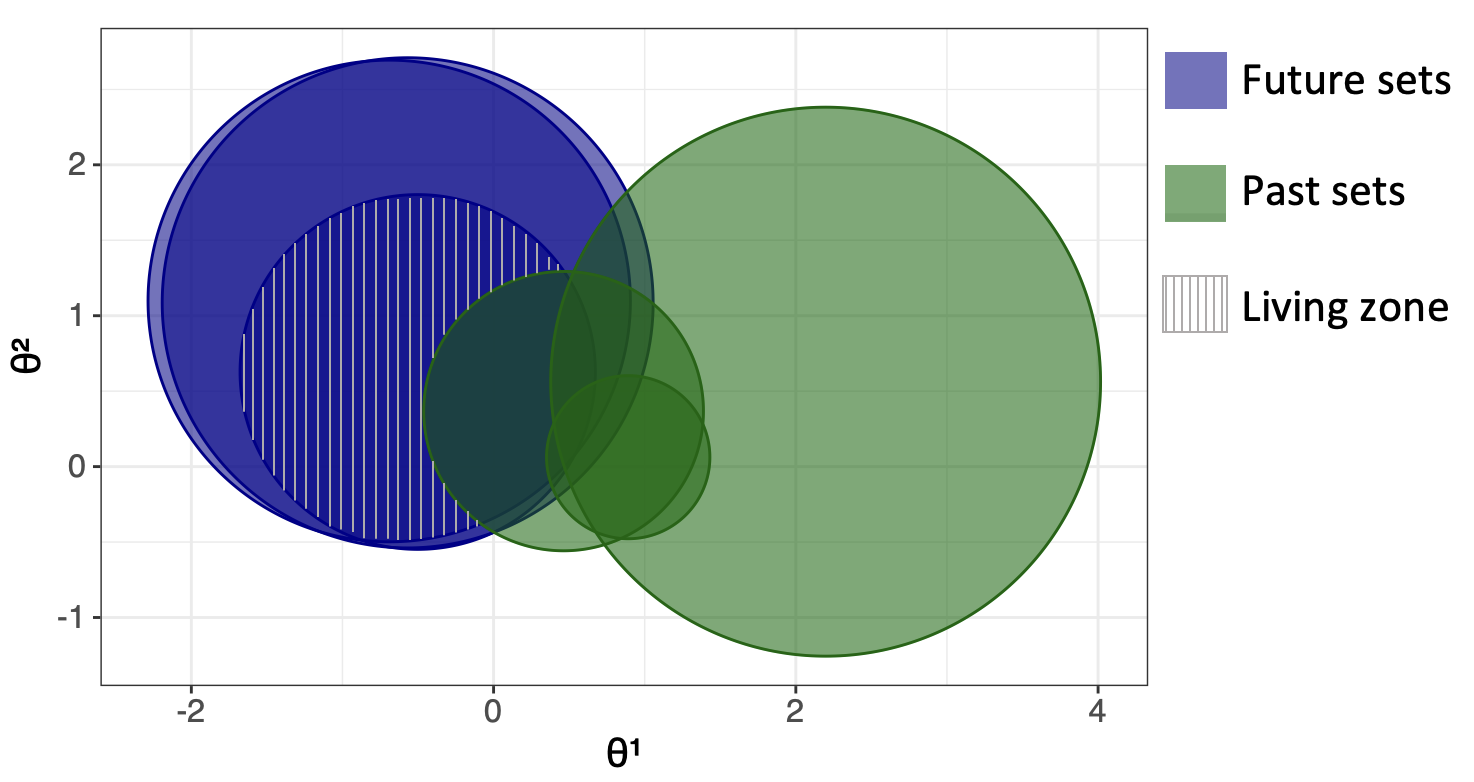}}
    \end{center}
 	\caption{Examples of building a living zone $Z^i_t$ with $\lvert\mathcal{P}^i\rvert = \lvert\mathcal{F}^i(t)\rvert = 3$ for the Gaussian case in 2-D ($\mu = 0,\sigma=1$). The green disks are S-type sets of the past set $\mathcal{P}^i$. The blue disks are  S-type sets of the future set $\mathcal{F}^i(t)$. The shaded area is the set $Z^i_t$.}
 	\label{fig-Figure3}
 \end{figure}

\section{Geometric Functional Pruning Optimal Partitioning}
\label{sec-GeomFPOP}
\subsection {General Principle of GeomFPOP}
\label{sec-principle}
Rather than considering an exact representation of the $Z^i_t$, our idea is to consider a hopefully slightly larger set that is easier to update. To be specific, for each $Z^i_t$ we introduce $\tilde{Z}^i_t$, called \textit{testing set}, such that $Z^i_t\subset \tilde{Z}^i_t$. If at time $t$ $\tilde{Z}^i_t$ is empty thus is $Z^i_t$ and thus change $i-1$ can be pruned. From Proposition \ref{proposition_sets} we have that starting from $Z^i_{i} = \mathbb{R}^p$ the set $Z^i_t$ is obtained by successively applying two types of operations: intersection with an S-type set $S$ $(Z\bigcap S)$ or subtraction of an S-type set $S$ $(Z\setminus S)$. Similarly, starting from $\tilde{Z}^i_{i} = \mathbb{R}^p$ we obtain $\tilde{Z}^i_t$ by successively applying approximation of these intersection and subtraction operations. Intuitively, the complexity of the resulting algorithm is a combination of the efficiency of the pruning and the easiness of updating the testing set. 

\paragraph{A Generic Formulation of GeomFPOP} In what follows we will generically describe GeomFPOP, that is, without specifying the precise structure of the testing set $\tilde{Z}^i_t$. We call $\widetilde{\mathbf{Z}}$ the set of all possible $\tilde{Z}^i_t$ and assume the existence of two operators $\bigcap_{\tilde{Z}}$ and $\setminus_{\tilde{Z}}$. We have the following assumptions for these operators.
\begin{definition}
\label{def-assumption1}
The two operators $\inter$ and $\excl$ are such that:
\begin{enumerate}
    \item the left input is a $\tilde{Z}$-type set (that is an element of $\mathbf{\tilde{Z}}$);
    \item the right input is a $S$-type set;
    \item the output is a $\tilde{Z}$-type set;
    \item $\tilde{Z} \bigcap S \subset \tilde{Z} \inter S$ and $\tilde{Z} \setminus S \subset \tilde{Z} \excl S$.
\end{enumerate}
\end{definition}
We give a proper description of two types of testing sets and their approximation operators in Section~\ref{sec-approximation}.

At each iteration $t$ GeomFPOP will construct $\tilde{Z}^i_{t}$ (with $i<t$) from $\tilde{Z}^i_{t-1}$, $\mathcal{P}^i$ and $\mathcal{F}^i(t)$ iteratively using the two operators $\inter$ and $\excl$. To be specific, we define $S_j^F$ the j-th element of $\mathcal{F}^i(t)$ and $S_P^j$ the j-th element of $\mathcal{P}^i$, we use the following iterations:
\begin{equation*}
    \left\{
 	  \begin{aligned}
       &A_{0} =\tilde{Z}^i_{t-1} \,, &  A_j = A_{j-1}\,\inter\, S_j^F\,, & \quad j = 1,\dots , |\mathcal{F}^i(t)|\,,\\
        &B_{0} =A_{|\mathcal{F}^i(t)|}\,, & B_j = B_{j-1}\,\excl \, S_P^j\,, &\quad  j = 1,\dots , |\mathcal{P}^i| \,,\\
    \end{aligned}  
    \right.
\end{equation*}
and define $\tilde{Z}^i_{t} = B_{|\mathcal{P}^i|}.$
Using the fourth property of Definition \ref{def-assumption1} and Proposition \ref{proposition_sets}, we get that at any time of the algorithm $\tilde{Z}^i_t$ contains ${Z}^i_t.$

The pseudo-code of this procedure is described in Algorithm \ref{algUpd}. 

The $\select(\mathcal{A})$ step in Algorithm \ref{algUpd}, where $\mathcal{A} \subset \mathbf S$, returns a subset of $\mathcal{A}$. By default, $\select(\mathcal{A}):= \mathcal{A}$.

\begin{algorithm}[H]
\caption{Geometric update rule of $\tilde{Z}^i_t$}
\label{algUpd}
\begin{algorithmic}[1]
\STATE {\bf procedure} $\update(\tilde{Z}_{t-1}^i,\mathcal{P}^i, \mathcal{F}^i(t),\,i<t)$
\STATE $\tilde{Z}_t^i\gets \tilde{Z}_{t-1}^i$ 
\FOR{$S\in \select(\mathcal{F}^i(t))$}
    \STATE $\tilde{Z}_t^i \gets \tilde{Z}_t^i\inter S$
 \ENDFOR
\FOR{$S \in \select(\mathcal{P}^i)$}
    \STATE $\tilde{Z}_t^i \gets \tilde{Z}_t^i \excl S$
\ENDFOR
\RETURN $\tilde{Z}_t^i$
\end{algorithmic}
\end{algorithm} 
We denote the set of candidate change points at time $t$ as $\bm \tau_t$. 
Note that for any $(i-1)\in \bm \tau_t$  the sum of $|\mathcal P^i
|$ and $|\mathcal F^i(t)|$ is $|\bm \tau_t|$. 
With the default $\select()$ procedure we do $\mathcal{O}(p|\bm \tau_t|)$ operations in the $\update$ procedure. By limiting the number of elements returned by $\select()$ we can reduce the complexity of the $\update$ procedure.
\begin{remark}
\label{rem:select}
For example, if the operator $\mathcal{A} \mapsto\select(\mathcal{A})$, regardless of $|\mathcal A|$, always returns a subset of constant size, then the overall complexity of GeomFPOP is at worst $\sum_{t=1}^{n}\mathcal{O}(p|\bm \tau_t|)$.
\end{remark}

Using this $\update()$ procedure we can now informally describe the GeomFPOP algorithm. At each iteration the algorithm will
\begin{enumerate}
    \item find the minimum value for $Q_t$, $\hat Q_t$; and the best position for last change point $\hat \tau_t$ (note that this step is standard: as in the PELT algorithm we need to minimize the cost of the last segment defined in Equation  \eqref{eq-Cy_it});
    \item compute all sets $\tilde{Z}_{t}^{i}$ using $\tilde{Z}_{t-1}^{i}$, $\mathcal{P}^i$, and $\mathcal{F}^i(t)$ with the $\update()$ procedure.
    \item Remove changes such that $\tilde{Z}_{t}^{i}$ is empty.
\end{enumerate}

To simplify the pseudo-code of GeomFPOP, we also define the following operators:
\begin{enumerate}
\item $\BestCostBestTau(t)$ operator returns two values: the minimum value of $Q_t$, $\hat Q_t$, and the best position for last change point $\hat \tau_t$ at time $t$ (see Section \ref{sec-UpdateRule}); 
\item $\getconvexsets(i,t)$ operator returns a pair of sets ($\mathcal{P}^i$, $\mathcal{F}^i(t)$) for change point candidate $i-1$ at time $t$;
\item $\backtracking (\hat\tau, n)$ operator returns the optimal segmentation for $y_{1:n}$. 
\end{enumerate}
The pseudo-code of GeomFPOP is presented in Algorithm \ref{alg:exact}.
\begin{algorithm}[!ht]
\caption{GeomFPOP algorithm}
\label{alg:exact}
\begin{algorithmic}[1]
\STATE {\bf procedure} $\MultiD(y, \Omega(\cdot, \cdot),  \beta)$
\STATE $\hat Q_0 \gets 0,\quad Q_0(\theta) \gets \beta\,,\quad \bm \tau_0 \gets \emptyset, \quad \{\tilde{Z}^{i}_{i}\}_{i\in \{1,\dots,n\}}\gets  \mathbb{R}^p$
\FOR{$t = 1,\dots, n$}
    \STATE $Q_t(\theta) \gets \min \{ Q_{t-1}(\theta), \hat Q_{t-1} + \beta\} + \Omega(\theta, y_t)$
    \STATE $(\hat Q_t, \hat\tau_t) \gets \BestCostBestTau(t)$
    \FOR{$i-1\in \tau_{t-1}$}
        \STATE $(\mathcal{P}^i, \mathcal{F}^i(t))\gets \getconvexsets(i,t)$
        \STATE$\tilde{Z}_t^i \gets \update(\tilde{Z}_{t-1}^i,\mathcal{P}^i, \mathcal{F}^i(t),i,t)$
        \IF{$\tilde{Z}_t^i = \emptyset$}
            \STATE $\, \bm \tau_{t-1} \gets \bm \tau_{t-1} \backslash\{i-1\}$
        \ENDIF
    \ENDFOR
 \STATE $\bm\tau_t \gets (\bm \tau_{t-1}, t-1)$
\ENDFOR
\RETURN  $cp(n) \gets \backtracking(\hat\tau = (\hat\tau_1,\ldots, \hat\tau_n), n)$
\end{algorithmic}
\end{algorithm}

\begin{remark}
\label{rem:random}
Whatever the number of elements returned by the $\select$ operator for computing $\tilde Z^i_t$, we can guarantee the exactness of the GeomFPOP algorithm, since the approximate living zone (the testing set) includes the living zone \eqref{eq-setsZ}, as we consider less intersections and set subtractions.
\end{remark}


\section{Approximation Operators}
\label{sec-approximation}

The choice of the geometric structure and the way it is constructed directly affect\sout{s} the computational cost of the algorithm. We consider two types of testing set $\tilde{Z} \in \mathbf{\tilde{Z}}$: an S-type set $\tilde{S}\in \mathbf{S}$ (see Definition \ref{def-defS}) and a hyperrectangle $\tilde{R}\in  \mathbf{R}$ defined below.
\begin{definition} (Hyperrectangle)
Given two vectors in $\mathbb{R}^p$, $\tilde{l}$ and $\tilde{r}$ we define the set $\tilde{R}$, called \textit{hyperrectangle}, as:
            \begin{equation*}
            \label{setR}
            \tilde{R} = [\tilde{l}_1,\tilde{r}_1]\times \dots \times[\tilde{l}_p,\tilde{r}_p]\,. \\
            \end{equation*}
             We denote the set of all possible sets $\tilde{R}$ as $\mathbf{R}$.
\end{definition}
To update the testing sets we need to give a strict definition of the operators $\bigcap_{\tilde{Z}}$ and $\setminus_{\tilde{Z}}$ for each type of testing set. To facilitate the following discussion, we rename them. For the first type of geometric structure, we rename the testing set $\tilde{Z}$ as $\tilde{S}$, the operators $\inter$ and $\excl$ as $\interS$ and $\exclS$ and $\tilde{Z}$-type approximation as S-type approximation. And, likewise, we rename the testing set $\tilde{Z}$ as $\tilde{R}$, the operators $\inter$ and $\excl$ as $\interR$ and $\exclR$ and $\tilde{Z}$-type approximation as R-type approximation for the second type of geometric structure.

\subsection{S-type Approximation}
With this approach, our goal is to keep track of the fact that at time $t = 1,\dots, n$ there is a pair of changes $(u_1,u_2)$, with $u_1 < i < u_2\le t$
such that $S^i_{u_2}\subset S^{u_1}_{i}$
or there is a pair of changes $(v_1,v_2)$, with $i  < v_1 < v_2\le t$  such that $S^i_{v_1}\bigcap S^i_{v_2}$ is empty. If at time $t$ at least one of these conditions is met, we can guarantee that the set $\tilde{S}$ is empty, otherwise, we propose to keep as the result of approximation the last future S-type set $S^i_t$, because it is always included into the set $Z^i_t$. This allows us to quickly check and prove (if $\tilde{S} =\emptyset$) the emptiness of set $Z^i_t$.

We consider two generic S-type sets, $S$ and $\tilde{S}$  from $\mathbf{S}$, described as in (\ref{eq-setS}) by the functions $s$ and  $\tilde{s}$:
    \begin{equation*}
    \label{setSStilde}
         s(\theta) = \sum_{k=1}^p s^k(\theta^k)- \Delta\,,\quad\quad
        \tilde{s}(\theta) = \sum_{k=1}^p {\tilde{s}}^{k}(\theta^k)- \tilde{\Delta}\,. 
    \end{equation*}
\begin{definition}
\label{def_oper_S}
For all  $S$ and $\tilde{S}$ in $\mathbf{S}$ 
we define the operators $\interS$ and $\exclS$ as:
\begin{equation*}
\label{Zempty}
    \begin{aligned}
        &\tilde{S}\, \interS\, S& = \left\{
        \begin{aligned}
            & \emptyset \,,  & \hbox{ if }  \tilde{S}\bigcap S = \emptyset \,,\\
            & \tilde{S}\,, & \hbox{otherwise}\,,\\
        \end{aligned} 
        \right.\\
         &\tilde{S} \,\exclS\, S   & = \left\{
        \begin{aligned}
        & \emptyset \,,  & \hbox{ if }  \tilde{S} \subset S\,,\\
        & \tilde{S}\,, & \hbox{otherwise}\,.\\
        \end{aligned} 
        \right.
    \end{aligned}
\end{equation*}
\end{definition}
As a consequence, we only need an easy way to detect any of these two geometric configurations: $\tilde{S}\bigcap S$ and $\tilde{S} \subset S$.

In the Gaussian case, the S-type sets are $p$-balls and an easy solution exists based on comparing radii (see \ref{InterandExclBalls} for details). In the case of other models (as Poisson or negative binomial), intersection and inclusion tests can be performed based on a solution using separative hyperplanes and iterative algorithms for convex problems (see \ref{append:IntersectionSsets}). We propose another type of testing set solving all types of models with the same method.

\subsection{R-type Approximation}
\label{sec:R_appr}
Here, we approximate the sets $Z^i_t$ by hyperrectangles $\tilde{R}^i_t \in \mathbf{R}$. A key insight of this approximation is that given a hyperrectangle $R$ and an S-type set $S$ we can efficiently (in $\mathcal{O}(p)$ using Proposition \ref{prop_solution_rect}) recover the best hyperrectangle approximation of $R \bigcup S$ and $R \setminus S.$ Formally we define these operators as follows.
\begin{definition} (Hyperrectangles Operators $\interR$, $\exclR$ ) 
For all  $R, \tilde{R} \in \mathbf{R}$ and $S\in \mathbf{S}$  we define the operators $\interR$ and $\exclR$ as:
\begin{equation*}
    \begin{aligned}
     R \interR S = \bigcap_{\{\tilde{R} | R \bigcap S \subset \mathbf{R}\}} \tilde{R}\,,\\
      R \exclR S = \bigcap_{\{\tilde{R} | R \setminus S \subset \mathbf{R}\}} \tilde{R}\,.
\end{aligned}
\end{equation*}
\end{definition}

We now explain how we compute these two operators. First, we note that they can be recovered by solving $2p$ one-dimensional optimization problems.
\begin{proposition}
The $k$-th minimum coordinates $\tilde{l}_k$ and maximum coordinates $\tilde{r}_k$ of   $\tilde{R} = R \interR S$ (resp. $\tilde{R} = R \exclR S$) is obtained as
\begin{equation}
\label{inclusionOptim}
    \tilde{l}_k \hbox{ or } \tilde{r}_k = 
    \left\{
    \begin{aligned}
        &\min_{\theta_k \in \mathbb{R}} \hbox{ or } \max_{\theta_k \in \mathbb{R}}  \theta_k\,,\\
        & \hbox{subject to } \varepsilon s(\theta) \le 0 \,,\\
        & \quad \quad \quad \quad \quad l_j \le \theta_j \le r_j\,,\quad j = 1,\dots,p \,,\\
    \end{aligned}
    \right.  
\end{equation}
with $\varepsilon = 1$ (resp. $\varepsilon = -1$). 
\end{proposition}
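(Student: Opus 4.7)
The plan is to reduce both statements to the general fact that the intersection of all axis-aligned hyperrectangles containing a set $A \subset \mathbb{R}^p$ is itself a hyperrectangle, namely the closed axis-aligned bounding box of $A$, whose $k$-th coordinates are $\inf_{\theta\in A}\theta_k$ and $\sup_{\theta\in A}\theta_k$. This is a straightforward observation: any hyperrectangle containing $A$ must, projected onto the $k$-th axis, cover the coordinate range of $A$, and conversely the product of the tight coordinate intervals is itself a hyperrectangle containing $A$.

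For $R \interR S$, I would apply this fact directly with $A = R\bigcap S$. Writing $R$ componentwise as $[l_1,r_1]\times\dots\times[l_p,r_p]$ and using $S = s^{-1}(-\infty,0]$, the feasible set coincides exactly with the constraint set of \eqref{inclusionOptim} for $\varepsilon = 1$, and minimizing (resp.\ maximizing) $\theta_k$ over it returns $\tilde l_k$ (resp.\ $\tilde r_k$), as claimed.

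For $R \exclR S$, we have $A = R\setminus S = R \bigcap \{\theta : s(\theta) > 0\}$, and the only subtlety is that \eqref{inclusionOptim} with $\varepsilon = -1$ uses the non-strict inequality $-s(\theta)\le 0$, i.e.\ $\{s\ge 0\}$. I would close this gap by observing that the bounding box of any set equals the bounding box of its closure, and then showing that $\overline{R\setminus S}$ shares its coordinate-wise extrema with $R\bigcap \{s\ge 0\}$. The key input is Assumption~3 (convexity of $\omega$), which makes $s$ convex and hence $\{s\le 0\}$ a convex set whose boundary $\{s=0\}$ consists of accumulation points of $\{s>0\}$. Therefore the extrema of $\theta_k$ over $R\bigcap \{s\ge 0\}$ are attained as limits of feasible points of $R\setminus S$, giving the formula with $\varepsilon = -1$.

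The main obstacle is precisely this closure/boundary argument, which also must cover the degenerate case where $R\subset \{s\le 0\}$: then $R\setminus S=\emptyset$ while the program \eqref{inclusionOptim} with $\varepsilon=-1$ is infeasible, so one has to agree that both sides encode the empty hyperrectangle and set $\tilde R = \emptyset$ by convention. Modulo these points, the proposition is essentially a rewriting of the $\inf$/$\sup$ characterization of the bounding box as a constrained optimization problem in each coordinate direction.
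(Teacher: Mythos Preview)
Your proposal is correct and, in fact, more careful than the paper's own treatment: the paper does not give a separate proof of this proposition at all, introducing it with ``we note that'' and moving directly to Proposition~\ref{prop_solution_rect}. Your bounding-box argument is exactly the intended reading of the definitions of $\interR$ and $\exclR$, and your handling of the strict-versus-nonstrict inequality in the $\exclR$ case (via closure and convexity of $s$) addresses a genuine subtlety that the paper silently absorbs into the formulation of \eqref{inclusionOptim}.
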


To solve the previous problems ($\varepsilon = 1$ or $-1$), we define the following characteristic points. 
\begin{definition} (Minimal, closest and farthest points) Let $S \in \mathbf{S}$, described by function $$s(\theta) = \sum_{k=1}^p s^k(\theta^k)-\Delta\,,$$ from the family of functions (\ref{eq-setSfunc}), with $\theta\in \mathbb{R}^p$. We define the minimal point $\mathbf{c}\in \mathbb{R}^p$ of $S$ as:
\begin{equation}
\label{c}
    \mathbf{c} = \left\{\mathbf{c}^k\right\}_{k=1,\dots,p}, \quad \text { with }\quad \mathbf{c}^k =\underset{\theta^k \in \mathbb{R}}\argmin  \{ s^k(\theta^k) \}\,.
\end{equation}
Moreover, with $R \in \mathbf{R}$ defined through vectors $l,r \in \mathbb{R}^p$, we define two points of $R$, the closest point $\mathbf{m} \in \mathbb{R}^p$ and the farthest point $\mathbf{M} \in \mathbb{R}^p$ relative to $S$ as 
\begin{equation*}
\label{mk}
\begin{aligned}
    \mathbf{m} =\left\{\mathbf{m}^k\right\}_{k=1,\dots,p},\quad \text { with }\quad 
    \mathbf{m}^k = \underset{l^k \le \theta^k \le r^k}{\argmin}  \left\{ s^k(\theta^k)\right\},\\
    \mathbf{M} =\left\{\mathbf{M}^k\right\}_{k=1,\dots,p},\quad \text { with }\quad 
    \mathbf{M}^k = \underset{l^k \le \theta^k \le r^k}{\argmax}  \left\{s^k(\theta^k)\right\}\,.
    \end{aligned}
\end{equation*}
\end{definition}

\begin{remark}
In the Gaussian case, $S$ is a ball in $\mathbb{R}^p$ and 
\begin{itemize}
    \item $\mathbf{c}$ is the center of the ball;
    \item $\mathbf{m}$ is the closest point to $\mathbf{c}$ inside $R$;
    \item $\mathbf{M}$ is the farthest point to $\mathbf{c}$ in $R$.
\end{itemize}
\end{remark}
\begin{figure}[ht]
 	\begin {center} {\includegraphics[width=1\linewidth]{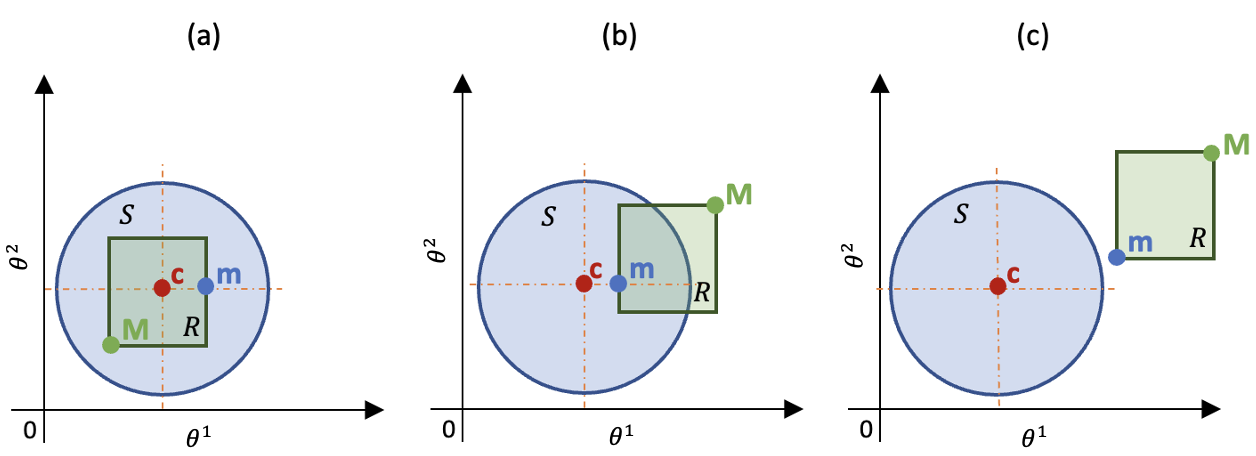}}\end {center}
 	\caption{ Three examples  of minimal point $\mathbf{c}$, closest point $\mathbf{m}$ and farthest point $\mathbf{M}$ for bivariate Gaussian case: (a) $R \subset S$; (b) $R \bigcap S \neq \emptyset$; (c) $R \bigcap S = \emptyset$.}
 	\label{fig-Figure4}
 \end{figure}
\begin{proposition}
\label{prop_solution_rect}
Let  $\tilde{R} = R \interR S$ (resp. $R\exclR S$), with $R \in \mathbf{R}$ and $S \in \mathbf{S}$. We compute the boundaries $(\tilde{l}, \tilde{r})$ of $\tilde{R}$ using the following rule:\\
(i) We define the point $\tilde{\theta}\in \mathbb{R}^p$ as the closest point $\mathbf{m}$ (resp. farthest $\mathbf{M}$).
For all $k = 1,\dots p$ we find the roots $\theta^{k_1}$ and $\theta^{k_2}$ of the one-variable $(\theta^k)$ equation 
\begin{equation*}
s^k(\theta^k)+\sum_{j\neq k} s^j(\tilde{\theta}^j) -\Delta= 0 \,.
\label{eqs_k}
\end{equation*}
If the roots are real-valued we consider that $\theta^{k_1} \le \theta^{k_2}$, otherwise we write $\Big[\theta^{k_1},\theta^{k_2}\Big] = \emptyset$. \\
(ii) We compute the boundary values $\tilde{l}^k$ and $\tilde{r}^k$ of 
$\tilde{R}$ as:
\begin{itemize}
    \item For $R\interR S$ $(k = 1,\dots,p)$:
\begin{equation}
\label{updateIntersection}
\Big[\tilde{l}^k,\tilde{r}^k\Big] = \Big[\theta^{k_1},\theta^{k_2}\Big] \bigcap \Big[l^k, r^k\Big]\,.
\end{equation}    
\item For $R\exclR S$ $(k = 1,\dots,p)$:
\begin{equation*}
\Big[\tilde{l}^k,\tilde{r}^k\Big] =
\left\{
\begin{aligned}
& \Big[l^k, r^k\Big]  \setminus \Big[\theta^{k_1},\theta^{k_2}\Big] \,,  & \hbox{if} \quad \Big[\theta^{k_1},\theta^{k_2}\Big] \not\subset \Big[l^k, r^k\Big]\,,\\
& \Big[l^k, r^k\Big]\,, & \hbox{otherwise}\,.\\
\end{aligned} 
\right.
\end{equation*}
\end{itemize}
If there is a dimension $k$ for which $\Big[\tilde{l}^k, \tilde{r}^k\Big]=\emptyset$, then the set $\tilde{R}$ is empty.
\end{proposition}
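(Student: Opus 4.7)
The plan is to start from the preceding proposition, which has already reduced the computation of the boundaries of $\tilde R$ to the $2p$ one-dimensional optimization problems in \eqref{inclusionOptim}. I would then exploit the separability $s(\theta) = \sum_k s^k(\theta^k) - \Delta$ together with the convexity of each $s^k$ (inherited from the convexity of $\omega$) to decouple coordinate $k$ from the others, reducing each problem to a univariate convex sublevel-set computation.

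\textbf{Intersection case ($\varepsilon = 1$, $\tilde R = R \interR S$).} The key observation is that a value $\theta^k \in [l^k, r^k]$ is feasible for \eqref{inclusionOptim} iff there exist $\theta^j \in [l^j, r^j]$, $j \neq k$, with $s(\theta) \le 0$. By separability this is equivalent to
$$s^k(\theta^k) + \min_{\theta^j \in [l^j, r^j],\,j\neq k}\sum_{j\neq k} s^j(\theta^j) \le \Delta,$$
and the inner minimum separates coordinate-wise into $\sum_{j\neq k} s^j(\mathbf{m}^j)$, justifying the choice $\tilde\theta = \mathbf{m}$. The feasible $\theta^k$ thus form the sublevel set $\{g \le 0\}$ of the univariate convex function $g(\theta^k) = s^k(\theta^k) + \sum_{j\neq k} s^j(\mathbf{m}^j) - \Delta$, which is either empty or the interval $[\theta^{k_1}, \theta^{k_2}]$ between $g$'s two real roots; intersecting with $[l^k, r^k]$ yields \eqref{updateIntersection}, with $\tilde l^k$ and $\tilde r^k$ the endpoints of the resulting interval.

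\textbf{Exclusion case ($\varepsilon = -1$, $\tilde R = R \exclR S$).} Dually, $\theta^k \in [l^k, r^k]$ lies in the projection of $R \setminus S$ onto the $k$-th axis iff there exists $\theta^{-k}$ in the box with $s(\theta) > 0$, equivalently
$$s^k(\theta^k) + \max_{\theta^j \in [l^j, r^j],\, j\neq k}\sum_{j\neq k} s^j(\theta^j) > \Delta,$$
whose inner maximum is $\sum_{j\neq k} s^j(\mathbf{M}^j)$, yielding $\tilde\theta = \mathbf{M}$. The feasible set in $\theta^k$ is then the complement $[l^k, r^k] \setminus [\theta^{k_1}, \theta^{k_2}]$, and its bounding interval depends on how $[\theta^{k_1}, \theta^{k_2}]$ sits inside $[l^k, r^k]$: if it is strictly interior, the removal produces two pieces and the bounding interval cannot shrink below $[l^k, r^k]$; otherwise it produces a single interval equal to the set difference itself, which is exactly the dichotomy written in the statement.

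The main obstacle will be the careful bookkeeping around degenerate configurations rather than any single technical step: handling the no-real-roots case via the convention $[\theta^{k_1}, \theta^{k_2}] = \emptyset$ (which correctly forces $\tilde R = \emptyset$ in the intersection case and leaves $\tilde R = R$ in the exclusion case), checking that the ``$\not\subset$'' clause in the exclusion formula correctly absorbs both partial-overlap and disjoint configurations, and finally deducing the emptiness criterion from the product structure $\tilde R = \prod_k [\tilde l^k, \tilde r^k]$.
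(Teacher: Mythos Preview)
Your proposal is correct and follows essentially the same route as the paper's proof: both start from the $2p$ optimization problems \eqref{inclusionOptim}, use the separability $s(\theta)=\sum_k s^k(\theta^k)-\Delta$ and the convexity of each $s^k$ to reduce the constraints on coordinates $j\neq k$ to a single extremal value, and identify that extremum with the closest point $\mathbf m$ (intersection) or the farthest point $\mathbf M$ (exclusion). Your projection-based phrasing (``$\theta^k$ is feasible iff the fiber meets $S$, resp.\ $S^c$'') is a bit more explicit than the paper's boundary argument, and your handling of the exclusion dichotomy and degenerate cases is more careful, but the underlying argument is the same.
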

The proof of Proposition \ref{prop_solution_rect} is presented in \ref{proof_prop_solution_rect}.

As a partial conclusion to this theoretical study, those ideas could be extended to some other models with missing values or dependencies between dimensions (e.g. piece-wise constant regression). However, it would require introducing new approximation operators of potential high complexity.

\section{Simulation Study of GeomFPOP}
\label{sec-study}

In this section, we study  the efficiency of GeomFPOP using simulations of multivariate independent time series. For this, we implemented GeomFPOP (with S and R types) and PELT for the multivariate independent Gaussian model in the R-package 'GeomFPOP' (\url{https://github.com/lpishchagina/GeomFPOP}) written in R/C++. By default, the value of penalty $\beta$ for each simulation was defined by the Schwarz Information Criterion proposed in \cite{Yao} as $\beta = 2p\sigma^2 \log{n}$ with $\sigma = 1$ known. As long as the per-dimension variance is known (or appropriately estimated) we can make this assumption ($\sigma = 1$ known) without loss of generality by rescaling the data by the standard deviation. 

\paragraph{Overview of our simulations}
First, for $2\le p \le 10$ we generated $p$-variate independent time series (multivariate independent Gaussian model with fixed variance) with $n=10^4$ data points and number of segments: $1, 5, 10, 50$ and $100$. The segment-specific parameter (mean) was set to $1$ for even segments, and $0$ for odd segments. As a quality control measure, we verified that PELT and GeomFPOP produced identical outputs on these simulated profiles. Second, we studied cases where the PELT approach is not efficient, that is when the data has no or few changes relative to $n$. Indeed, it was shown in \cite{Killick} and \cite{Maidstone} that the run time of PELT is close to $\mathcal{O}(n^2)$ in such cases. Therefore we considered simulations of multivariate time series without change (only one segment). 
By these simulations we evaluated the pruning efficiency of GeomFPOP (using S and R types) for dimension $2\le p\le 10$ (see Figure \ref{fig-Figure5} in Subsection \ref{NC}). 
 For small dimensions we also evaluated  the run time of GeomFPOP and  PELT and compared them (see Figure \ref{fig-Figure6} in  Subsection \ref{TCsmall}).  In addition, we considered  another approximation of the $Z^i_t$ where we applied our $\interR$ and $\exclR$ operators only for a randomly selected subset of the past and future balls in the $\update$ procedure  (refer to Algorithm \ref{algUpd}). In practice, this strategy turned out to be faster computationally than the full/original GeomFPOP and PELT (see Figure \ref{fig-Figure7} in Subsection \ref{GeomFPOP_random}). 
For this strategy we also generated time series of a fixed size ($n=10^6$ data points) and varying number of segments and evaluated how the run time vary with the number of segments for small dimensions. Our empirical results confirmed that the GeomFPOP (R-type : $\mathtt{random/random}$) approach is computationally comparable to PELT when the number of changes is large (see Figure \ref{fig-Figure9} in Subsection \ref{Run_time_segment_nb}).

\subsection{The Number of change point Candidates Stored Over Time}
\label{NC}
We evaluate the functional pruning efficiency of the GeomFPOP method using $p$-variate independent Gaussian noise of length $n= 10^4$ data points. For such series, PELT typically does not pruned (e.g. for $t=10^4$, $p=2$ it stores almost always $t$ candidates).

We report in Figure \ref{fig-Figure5} the percentage of candidates that are kept by GeomFPOP as a function of $n$, $p$ and the type of pruning (R or S). Regardless of the type of approximation and contrary to PELT, we observe that there is some pruning. However when increasing the dimension $p$, the quality of the pruning decreases. 

Comparing the left plot of Figure \ref{fig-Figure5} with the right plot we see that for dimensions $p=2$ to $p=5$ R-type prunes more than the S-type, while for larger dimensions the S-type  prunes more than the R-type. For example, for $p = 2$ at time $t=10^4$  by GeomFPOP (R-type) the number of candidates stored over $t$  does not exceed $1\%$ versus $3\%$ by GeomFPOP (S-type). This intuitively makes sense. One the one hand, the R-type approximation of a sphere deteriorates as the dimension increases. On the other hand with R-type approximation every new approximation is included in the previous one. For small dimensions this memory effect outweighs the roughness of the approximation.

\begin{figure}[ht]
 	\begin{center} {\includegraphics[width=0.9\linewidth]{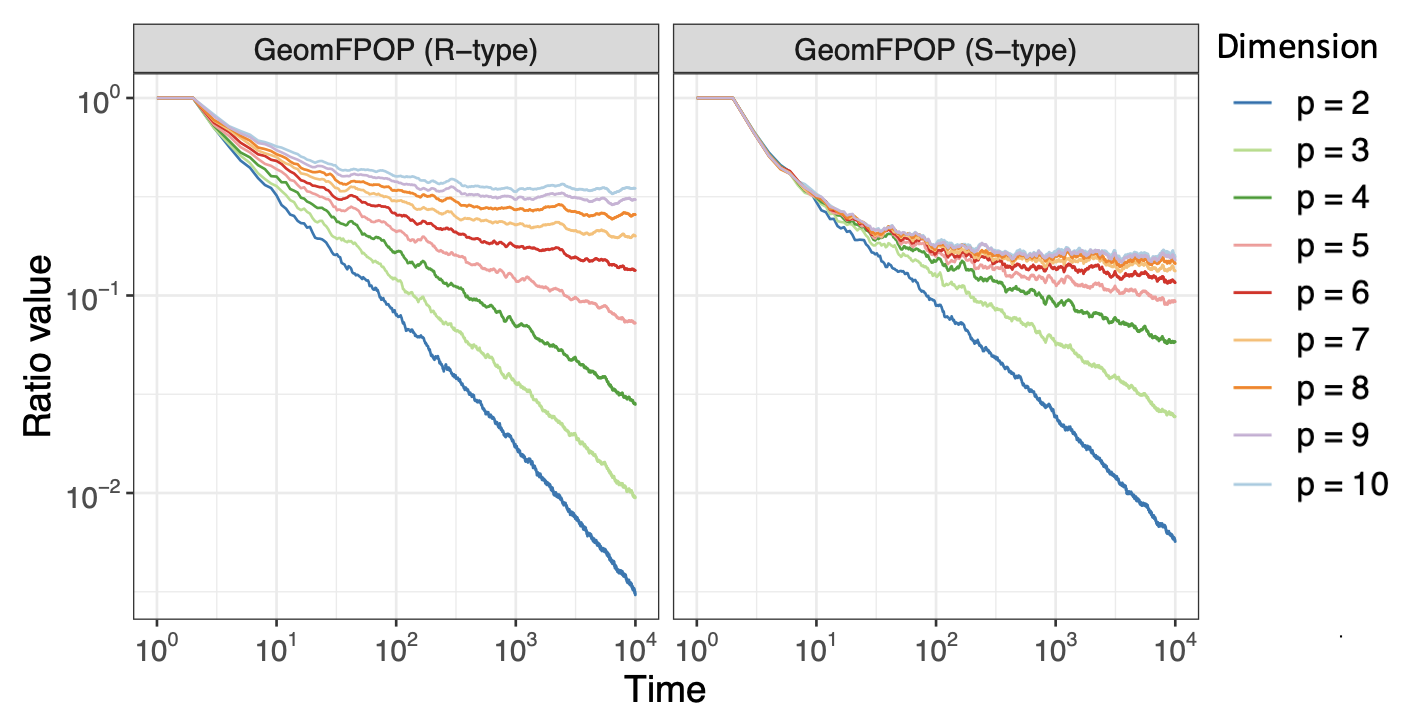}}\end{center}
 	\caption{Percentage of candidate change points stored over time by GeomFPOP with R (left) or S (right) type pruning for dimension $p = 2,\dots, 10$. Averaged over $100$ data sets.}
\label{fig-Figure5}
 \end{figure}

Based on these results we expect that R-type pruning GeomFPOP will be more efficient than S-type pruning for small dimensions.

\subsection{Empirical Time Complexity of GeomFPOP}
\label{TCsmall}
We studied the run time of GeomFPOP (S and R-type) and compared it to PELT for small dimensions. We simulated data generated by a $p$-variate i.i.d. Gaussian noise and saved their run times with a three minutes limit. The results are presented in Figure \ref{fig-Figure6}. We observe that GeomFPOP is faster than PELT only for $p=2$. For $p=3$ run times are comparable and for $p=4$ GeomFPOP is slower. This is not in line with the fact that GeomFPOP prunes more than PELT. However, as explained in Section \ref{sec-principle}, the computational complexity of GeomFPOP and PELT is affected by both the efficiency of pruning and the number of comparisons conducted at each step. For PELT at time $t$, all candidates are compared to the last change, resulting in a complexity of order $\mathcal{O}(p|\bm \tau_t^{PELT}|)$. On the other hand, GeomFPOP compares all candidates to each other (refer to Algorithm \ref{algUpd} and Remark \ref{rem:select}), leading to a complexity of order $\mathcal{O}(p|\bm \tau_t^{GeomFPOP}|^2)$. In essence, the complexity of GeomFPOP is governed by the square of the number of candidates. Therefore, GeomFPOP is expected to be more efficient than PELT only if its square number of candidates is smaller than the number of candidates for PELT. Based on the information presented in Figure \ref{fig-Figure6}, we argue that this condition holds true only for dimensions $p=2$ and $3$. Indeed, analysis of the number of comparisons between PELT and GeomFPOP (see \ref{sec:GeomFPOPvsPELT}) supports this claim, revealing that GeomFPOP (S-type) outperforms PELT only when $p\leq 2$ and  GeomFPOP (R-type) outperforms PELT only when $p\leq 3$ (see Figure \ref{fig-Figure13} in  \ref{sec:GeomFPOPvsPELT}). This leads us to consider a randomized version of GeomFPOP.

\begin{figure}[ht]
 	\begin {center} {\includegraphics[width=1\linewidth]{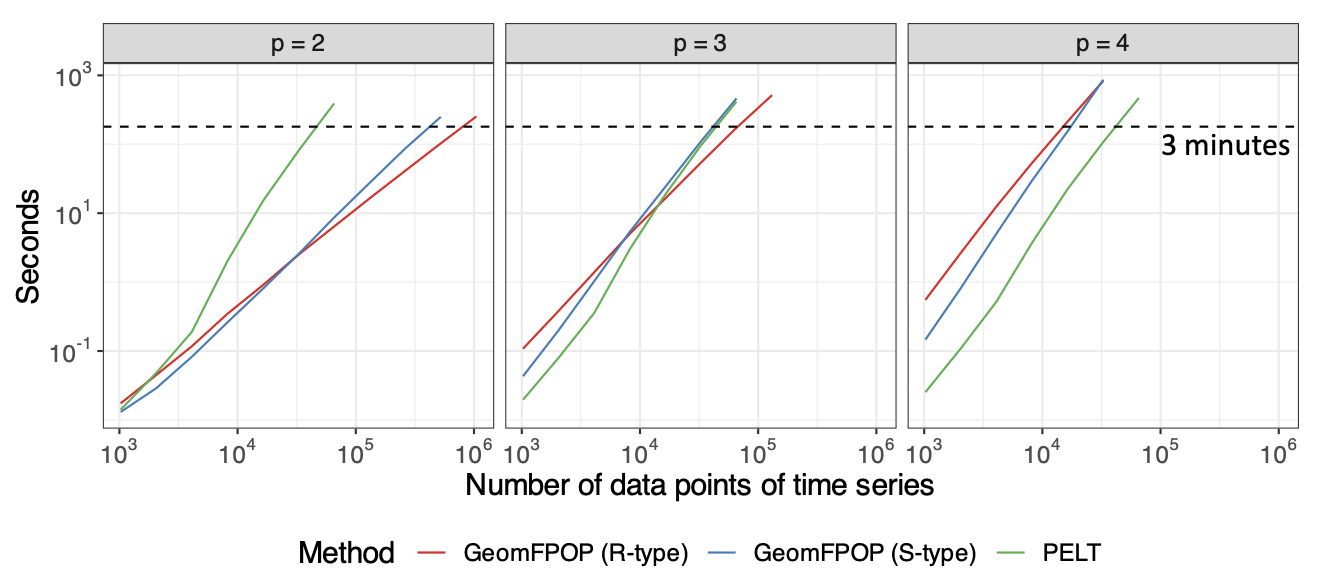}}\end {center}
 	\caption{Run time of GeomFPOP (S and R types) and PELT using multivariate time series without change points. The maximum run time of the algorithms is 3 minutes. Averaged over $100$ data sets.}
 	\label{fig-Figure6}
 \end{figure}

\subsection{Empirical Time Complexity of a randomized GeomFPOP}
\label{GeomFPOP_random}
R-type GeomFPOP is designed in such a way that at each iteration we need to consider all past and future spheres  linked to a change $i$. In practice, it is often sufficient to consider just a few of them to get an empty set. Having this in mind, we propose a further approximation of the $Z^i_t$ where we apply our $\interR$ and $\exclR$ operators only for a randomly selected subset of the past and future sets. In detail, we propose to redefine the output of the $\select()$ function in Algorithm \ref{algUpd} on any sets $\mathcal{P}^i$ and $\mathcal{F}^i(t)$ as:
\begin{itemize}
\item  $\select( \mathcal{P}^i)$ returns one random set from $ \mathcal{P}^i$.
\item  $\select( \mathcal{F}^i(t))$  returns the last set $S^i_t$ and one random  set from $\mathcal{F}^i(t)$.
\end{itemize} Thus, we consider the following geometric update rule:
\begin{itemize}
    \item $(\mathtt{random / random)}$ At time $t$ we update hyperrectangle:
     \begin{enumerate}
          \item by only two intersection operations: one with the last S-type set $S^i_t$ from $\mathcal{F}^i(t)$, and one with a random  S-type set from $ \mathcal{F}^i(t)$;
          \item by only one exclusion operation with a random  S-type set from $ \mathcal{P}^i$.
      \end{enumerate}
\end{itemize}
In this approach, at time $t$ we need no more than three operations to update the testing set $\tilde{Z}^i_t$ for each $(i-1) \in \tau_t$. As can be seen in Figure \ref{RplotTCoptStrAll} of \ref{OptStr},  by making less comparisons, we prune less change points than in the general GeomFPOP (R-type) case, but still more than PELT. It is important to note that in this randomization, we compare each change point candidate with only two other change point candidates (rather than all in the general case of GeomFPOP (R-type)). Therefore, informally our complexity at time step $t$ is only $\mathcal{O}(p|\bm \tau_t^{GeomFPOP(\mathtt{random / random)}}|).$ According to the Remark \ref{rem:select} and the discussion in Section \ref{TCsmall}, even with large values of  $p$, the overall complexity of GeomFPOP should not be worse than that of PELT. We investigated other randomized strategies (see \ref{OptStr}) but this simple one was sufficient to significantly improve run times. The run time of our optimization approach and PELT in dimension ($p= 2, \dots, 10, 100$) are presented in  Figure \ref{fig-Figure7}. 
As in Subsection \ref{TCsmall}, run times were limited to three minutes and were recorded for simulations of length ranging $n$ from $2^{10}$ to $2^{23}$ data points $p$-variate i.i.d. Gaussian noise).

Although the $\mathtt{(random/random)}$ approach reduces the quality of pruning (see \ref{OptStr}), it gives a significant gain in run time compared to PELT in small dimensions. To be specific, with a run time of five minutes GeomFPOP, on average,   processes a time series with a length of about $8\times 10^6$, $10^6$  and $2,5\times 10^5$ data points in the dimensions $p=2,3$ and $4$, respectively. At the same time, PELT manages to process time series with a length of at most $6,5\times10^4$ data points in these dimensions. 
\begin{figure}[ht]
 	\begin{center} {\includegraphics[width=1\linewidth]{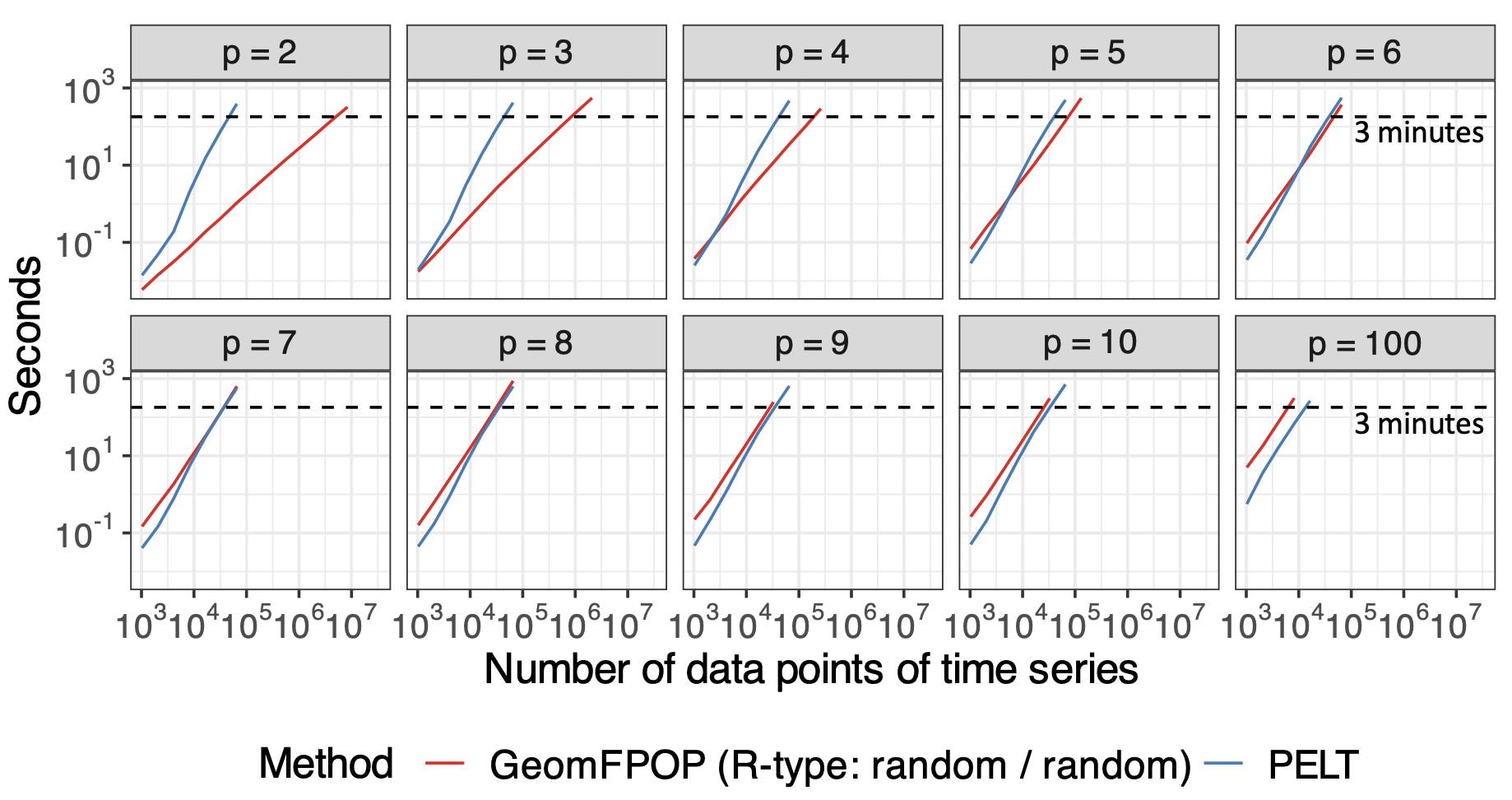}}
    \end{center}
 \caption{Run time of the $\mathtt{(random/random)}$ approach of { GeomFPOP} (R-type) and PELT using p-variate time series without change points ($p=2,\dots, 10,100$). The maximum run time of the algorithms is 3 minutes. Averaged over $100$ data sets.}
 \label{fig-Figure7}
\end{figure}

\subsubsection{Empirical complexity of the algorithm as a function of p}
\label{Run_time_p}
Using the run times \sout{data} obtained in the previous subsection, we also evaluate the slope coefficient $\alpha$ of the run time curve of GeomFPOP with random sampling of the past and future candidates for all considered dimensions. In Figure  \ref{fig-Figure8} we can see that already for $p\ge 7$ $\alpha$ is close to $2$.
  \begin{figure}[ht]
 	\begin {center} {\includegraphics[width=0.85\linewidth]{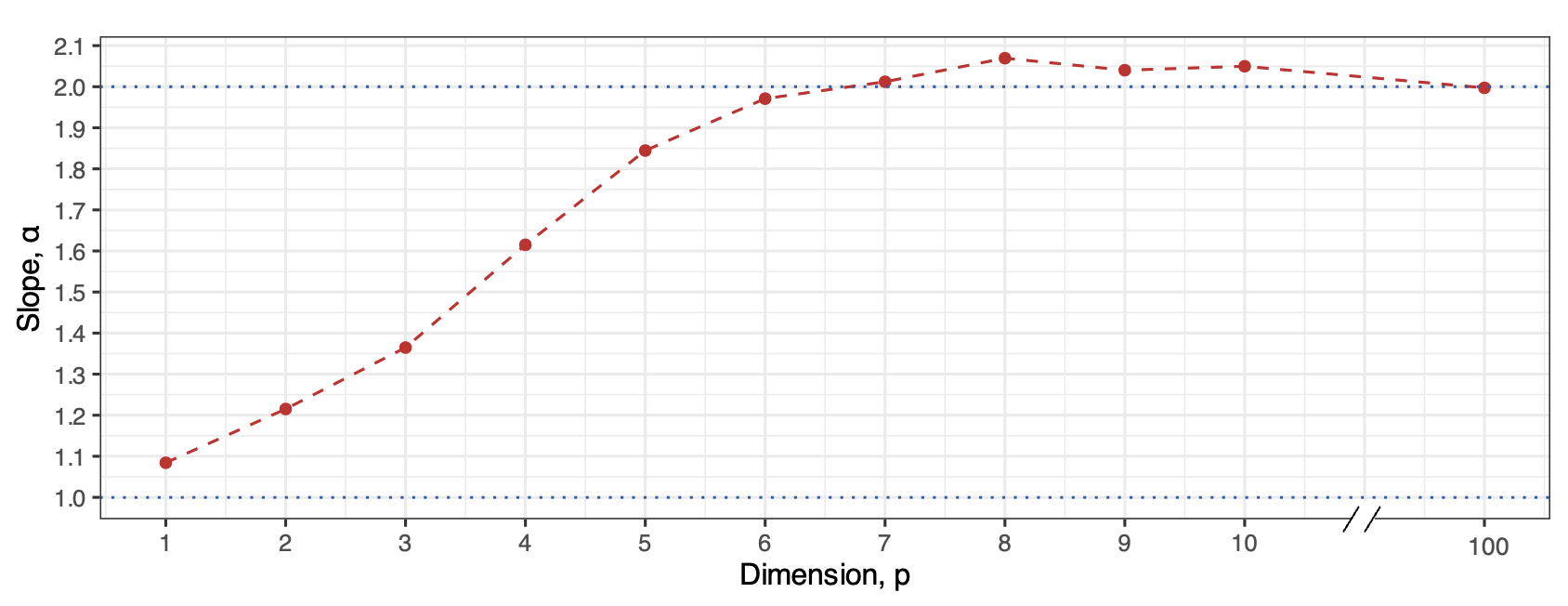}}\end {center}
 	\caption{Run time dependence of $\mathtt{(random/random)}$ approach of GeomFPOP (R-type) on dimension $p$.}
 	\label{fig-Figure8}
 \end{figure}
 
\subsubsection{Run time as a function of the number of segments} 
\label{Run_time_segment_nb}
For small dimensions we also generated time series with $n=10^6$ data points with increasing number of segments. We have considered the following number of segments: $(1,2,5) \times 10^i$(for $i=0,\dots,3$) and $10^4$. The mean was equal to $1$ for even segments,  and $0$ for odd segments. In Figure  \ref{fig-Figure9} we can see the run time dependence of the $\mathtt{(random/random)}$ approach of GeomFPOP (R-type) and PELT on the number of segments for this type of time series. 
For smaller number of segments (the threshold between small and large numbers of segments is around $5\times 10^3$ for all considered dimensions $p$) GeomFPOP $\mathtt{(random/random)}$ is an order of magnitude faster. But for large number of segments, it can be seen that the run times (both PELT and GeomFPOP) are larger. This might be a bit counter-intuitive. However, it is essential to recall that a similar trend of increased run time for a large number of segments was already noted in the one-dimensional case, as demonstrated in \cite{Maidstone}. This observation is explained as follows. When the number of segments becomes excessively large, the algorithm (both PELT and GeomFPOP) tends to interpret this abundance as an indication of no change, resulting in reduced pruning. As a conclusion of this simulation study, in \ref{Robustness} we make a similar analysis, but using time series in which changes are present only in a subset of dimensions. We observe that in this case GeomFPOP $\mathtt{(random/random)}$ will be slightly less effective but no worse than no change (see Figure \ref{fig-Figure12}).

\begin{figure}[ht!]
 	\begin{center} {\includegraphics[width=1\linewidth]{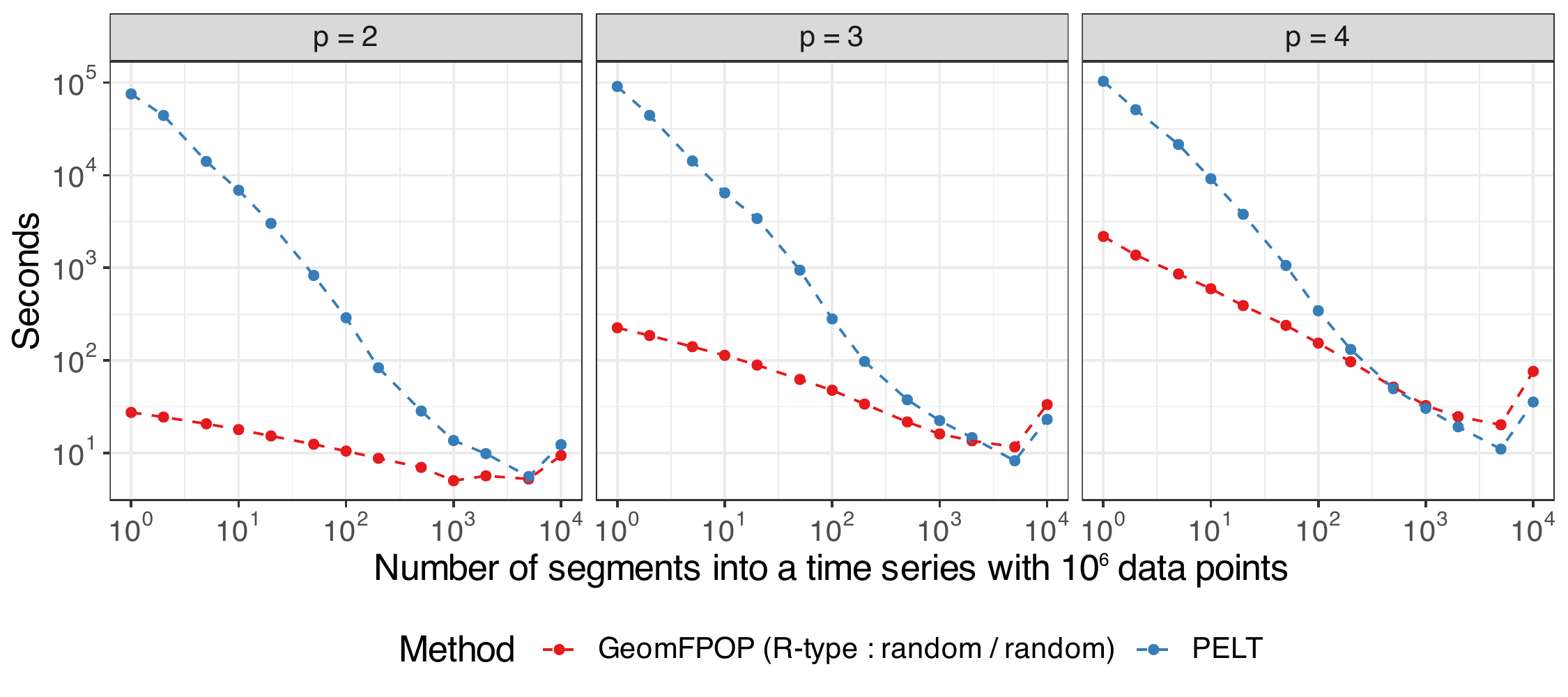}}\end{center}
 	\caption{Run time dependence of $\mathtt{(random/random)}$ approach of GeomFPOP (R-type) on the number of segments in time series with $10^6$ data points.}
 	\label{fig-Figure9}
 \end{figure}

\newpage
\section*{Acknowledgment}
We thank Paul Fearnhead for fruitful discussions.
\newpage
\appendix

\section{Examples of Likelihood-Based Cost Functions}
\label{sec-AppendixA}
We define a cost function for segmentation as in (\ref{eq-Cy_it}) by the function $\Omega(\cdot,\cdot)$ (the negative log-likelihood (times two)). In Table \ref{tab-tab1} is the expression of this function  linked to data point $y_i = (y_i^1,\dots, y_i^p)\in \mathbb R^p$ for three examples of parametric multivariate models.

\begin{table}[ht]
\caption{Likelihood-based cost function for the Gaussian, Poisson and negative binomial models. We suppose that the over-dispersion parameter $\phi$ of the negative binomial distribution is known.}
\label{tab-tab1}
\begin{center}
\begin{tabular}{|l|c|}
\hline
Distribution & $\Omega(\theta,y_i)$\\
\hline
Gaussian & $\sum_{k=1}^p (y_i^k -\theta^k)^2$\\
Poisson& $2 \sum_{k=1}^p \left\{ \theta^k - \log\left(\frac{(\theta^k)^{y^k_i}}{y^k_i!}\right)\right\}$\\
Negative Binomial& $-2 \sum_{k=1}^p\log\left((\theta^k)^{y_i^k}(1-\theta^k)^\phi \begin{pmatrix}
y_i^k+\phi-1\\
y_i^k 
\end{pmatrix}\right)$ \\
\hline
\end{tabular}
\end{center}
\end{table}

\section{Intersection and Inclusion of Two p-balls}
\label{InterandExclBalls}

We define two $p$-balls, $S$ and $S'$ in $\mathbb{R}^p$ using their centers $c$, $c' \in \mathbb{R}^p$ and radius $R$, $R' \in \mathbb{R}^{+}$ as
$$S = \{ x \in \mathbb{R}^p,\lvert\lvert x - c\rvert\rvert ^2 \le R^2\}\text{ and }S' = \{ x \in \mathbb{R}^p,\lvert\lvert x - c'\rvert\rvert ^2 \le R'^2\},$$
where $\lvert\lvert x - c\rvert\rvert ^2 = \sum_{k=1}^p (x^k - c^k)^2$, with $x = (x^1,..., x^p) \in \mathbb{R}^p$, is the Euclidean norm. 
The distance between centers $c$ and $c'$ is defined as $d(c, c') = \sqrt{\lvert\lvert c - c' \rvert\rvert^2}$. We have the following simple results:
$$S \bigcap S' = \emptyset \iff d(c,c') > R + R'\,,$$
$$S \subset S' \hbox{ or } S' \subset S \iff d(c,c') \le |R-R'|\,.$$

\section{Intersection and Inclusion tests}
\label{append:IntersectionSsets}
\begin{remark}
For any $S^i_j \in \mathbf{S}$ its associated function $s$ can be 
redefine after normalization by constant $j-i$ 
as:
$$s(\theta) = a(\theta) +  \langle b,\theta \rangle + c,$$ 
with $a(\cdot)$ is some convex function depending on $\theta$, $ b=\{b^k\}_{k =1,\dots, p} \in \mathbb{R}^p$ and $c \in \mathbb{R}$. \\
For example, in the Gaussian case, the elements have the following form:
\begin{equation*}		
\begin{aligned}
& a: \theta \mapsto \theta^2\,,& &b^k =  2\bar Y_{i:(j-1)}^k\,,&&c =\bar Y^2_{i:(j-1)} - \Delta_{ij}\,,\\ 
\end{aligned} 
\end{equation*}
where $\bar Y_{i:(j-1)}^k = \frac{1}{j-i}\sum_{u=i}^{j-1} y_u^k$ and $\bar Y^2_{i:(j-1)} = \frac{1}{j-i}\sum_{u=i}^{j-1} \sum_{k=1}^p (y_u^k)^2$.
\end{remark}
\begin{definition}
\label{app:func_h}
For all  $\theta \in \mathbb R^p$ and $S_1, S_2 \in \mathbf{S}$ with their associated functions, $s_1$ and $s_2$, we define a function $h_{12}$ and a hyperplane $H_{12}$ as:
\begin{equation*}		
\begin{aligned}
&h_{12}(\theta):= s_2(\theta) - s_1(\theta)\,,& &H_{12} := \left \{\theta \in \mathbb{R}^p | h_{12}(\theta) = 0 \right \}\,.\\ 
\end{aligned} 
\end{equation*}
We denote by $H_{12}^+ := \{\theta \in \mathbb{R}^p |h_{12}(\theta)> 0\}$ and $H_{12}^- := \{\theta \in \mathbb{R}^p |h_{12}(\theta)< 0\}$ the positive and negative half-spaces of $H_{12}$, respectively. We call $\mathbf{H}$ the set of hyperplanes.
\end{definition}
For all $S \in \mathbf{S}$ and $H \in \mathbf{H}$ we introduce a $\halfspace$ operator.
\begin{definition}
   The operator $\halfspace$ is such that:
   \begin{enumerate}
       \item the left input is an S-type set $S$;
       \item the right input is a hyperplane $H$;
       \item the output is the half-spaces of $H$, such that $S$ lies in those half-spaces. 
   \end{enumerate}
\end{definition}
\begin{definition}
\label{append:proposition} 
We define the output of $\halfspace(S,H)$ by the following rule:
\begin{enumerate}
\item We find two points, $\theta_1, \theta_2 \in \mathbb R^p$, as:
\begin{equation*}		
\left\{
\begin{aligned}
\theta_1 &= &\argmin_{\theta \in S} s(\theta),\\
\theta_2& =& \left\{
\begin{aligned}
\argmin_{\theta \in S} h(\theta),& &\text{if } \theta_1 \in H^+,\\
\argmax_{\theta \in S} h(\theta), & & \text{if } \theta_1 \in H^-.\\
\end{aligned} 
\right.
\end{aligned} 
\right.
\end{equation*}
\item We have:
\begin{equation*}		
\halfspace(S,H) = \left\{
\begin{aligned}
\{H^+\}, & &\text{if } \theta_1, \theta_2  \in H^+,\\
\{H^-\}, & & \text{if } \theta_1, \theta_2  \in H^-,\\
\{H^+, H^-\},& &  \text{otherwise}.\\
\end{aligned} 
\right.
\end{equation*}
\end{enumerate}
\end{definition}
\begin{lemma}
    \label{petite_lemma}
    $S_1 \subset H_{12}^-\Leftrightarrow \partial S_1 \subset H_{12}^-$, where $\partial(\cdot)$ denote the frontier operator.
\end{lemma}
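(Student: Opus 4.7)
The plan is to split the equivalence into two implications, the forward one being essentially free and the content residing in the converse. For $(\Rightarrow)$ I would simply observe that $S_1 = s_1^{-1}((-\infty,0])$ is closed (since $s_1$ is continuous), so $\partial S_1 \subset \overline{S_1} = S_1 \subset H_{12}^-$, and no further work is needed.

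For $(\Leftarrow)$ the pivotal preliminary step is to show that $h_{12} = s_2 - s_1$ is in fact \emph{affine}. This is precisely the content of the remark immediately before Definition \ref{app:func_h}: once each $s$ is normalized by its segment length, every $S$-type set in $\mathbf{S}$ is described by a function of the form $s(\theta) = a(\theta) + \langle b,\theta\rangle + c$ in which the convex part $a$ depends only on the likelihood model (and not on the data). Hence for $S_1,S_2$ arising from the same model the $a$-terms cancel and $h_{12}(\theta) = \langle b_2 - b_1,\theta\rangle + (c_2 - c_1)$, so $H_{12}^-$ is a genuine open half-space of $\mathbb{R}^p$, in particular a convex set. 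With this convexity in hand I would close the argument by a chord construction: let $\theta \in S_1$; the case $\theta \in \partial S_1$ is immediate, so assume $\theta \in \mathrm{int}(S_1)$. Because $s_1$ is convex and coercive in each coordinate in the Gaussian, Poisson and negative binomial models considered here, $S_1$ is compact, so for any direction $v \neq 0$ the line $t \mapsto \theta + tv$ meets $S_1$ in a closed interval $[t_-, t_+]$ with $t_- < 0 < t_+$; the two endpoints $\theta_{\pm} := \theta + t_{\pm} v$ lie in $\partial S_1$ and therefore in $H_{12}^-$. Since $\theta$ is a proper convex combination of $\theta_-$ and $\theta_+$ and $H_{12}^-$ is convex, $\theta \in H_{12}^-$, as required. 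The degenerate case $\mathrm{int}(S_1) = \emptyset$ reduces to $\partial S_1 = S_1$ and there is nothing to check.

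The only non-routine ingredient is recognizing the affinity of $h_{12}$: for a generic difference of convex functions $H_{12}^-$ could well be non-convex, allowing interior points of $S_1$ to escape it while the boundary stays inside, so the statement would fail. Once the affinity of $h_{12}$ is in place, the remainder is a standard convex-geometry argument and I do not anticipate any further obstacles.
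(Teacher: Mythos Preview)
Your argument is correct and matches the paper's approach: the paper's own proof is the single sentence ``follows from the convexity of $S_1$,'' which tacitly uses that $H_{12}^-$ is a genuine half-space (already asserted by calling $H_{12}$ a hyperplane in Definition~\ref{app:func_h}). Your explicit verification that $h_{12}$ is affine after normalization and the chord construction simply spell out what the paper leaves to the reader.
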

The proof of Lemma \ref{petite_lemma} follows from the convexity of $S_1$.
\begin{lemma}
    \label{lemma:inclusion}
     $S_1 \subset S_2$ (resp.  $S_2 \subset S_1$)  $\Leftrightarrow$  $S_1, S_2 \subset H_{12}^-$ (resp. $S_1, S_2 \subset H_{12}^+$).
\end{lemma}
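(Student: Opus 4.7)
The plan is to reduce the biconditional to a frontier-level statement via Lemma~\ref{petite_lemma}, and then check the sign of $h_{12}=s_2-s_1$ case by case. I will focus on the equivalence $S_1\subset S_2 \Leftrightarrow S_1,S_2\subset H_{12}^-$; the companion statement for $S_2\subset S_1$ and $H_{12}^+$ follows by swapping the roles of $S_1$ and $S_2$.

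I would start with the reverse direction, which is immediate. If $S_1\subset H_{12}^-$, then for every $\theta\in S_1$ we have $s_2(\theta)\le s_1(\theta)\le 0$, so $\theta\in S_2$ and hence $S_1\subset S_2$. In particular only the inclusion $S_1\subset H_{12}^-$ is actually used; the hypothesis on $S_2$ is not needed in this direction.

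For the forward direction, assume $S_1\subset S_2$. By Lemma~\ref{petite_lemma} it is enough to show $\partial S_1\subset H_{12}^-$ and $\partial S_2\subset H_{12}^-$. On $\partial S_1$ one has $s_1=0$ and, because $\partial S_1\subset S_1\subset S_2$, also $s_2\le 0$, hence $h_{12}\le 0$. On $\partial S_2$ fix $\theta$ with $s_2(\theta)=0$: either $\theta\notin S_1$, in which case $s_1(\theta)>0$ and $h_{12}(\theta)=-s_1(\theta)<0$; or $\theta\in S_1$, and then $\theta$ must lie in $\partial S_1$ as well, since an interior point of $S_1$ would have a neighbourhood contained in $S_1\subset S_2$, contradicting $\theta\in\partial S_2$. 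In this subcase $s_1(\theta)=s_2(\theta)=0$ and $h_{12}(\theta)=0$.

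The main subtlety, and the only step that is not pure bookkeeping of signs, is precisely this boundary-contact case $\partial S_1\cap\partial S_2\ne\emptyset$: there $h_{12}$ vanishes on part of the frontiers, so with the strict definition $H_{12}^-=\{h_{12}<0\}$ given in the paper the inclusions $\partial S_i\subset H_{12}^-$ have to be read up to the closure $\overline{H_{12}^-}=\{h_{12}\le 0\}$ (equivalently, tangency cases are allowed on the hyperplane $H_{12}$). Under that convention the two directions combine into the stated equivalence, and the symmetric statement for $S_2\subset S_1$ is obtained by interchanging the indices $1$ and $2$, which sends $H_{12}^-$ to $H_{12}^+$.
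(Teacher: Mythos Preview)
Your argument follows the same plan as the paper's proof---reduce to the frontier via Lemma~\ref{petite_lemma} and track the sign of $h_{12}=s_2-s_1$ there---and is correct. It is in fact more complete than the paper's version: in the forward direction the paper only verifies $S_1\subset H_{12}^-$ and omits the argument for $S_2\subset H_{12}^-$, and it does not address the strict-versus-nonstrict subtlety on $H_{12}^-$ that you correctly isolate in the tangency case $\partial S_1\cap\partial S_2\neq\emptyset$.
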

\begin{proof}
We have the hypothesis $\mathcal H_0:\{ S_1 \subset S_2\}$, then\\
\begin{tabular}{lllll}
 $\forall \theta \in  \partial S_1 $&$\left\{
\begin{aligned}
&s_1(\theta) = 0\,,\\
&s_2(\theta) \le 0\,,\\
\end{aligned} 
\right.$&$
\begin{aligned}
& [\text{by Definition \ref{def-defS}}]\\
& [\text{by } \mathcal{H}_0]
\end{aligned} 
$
&$\Rightarrow \theta \in H_{12}^- $&$\Rightarrow \partial S_1 \subset H_{12}^-.$\\
\end{tabular}
\\
Thus, according to Lemma \ref{petite_lemma}, $S_1 \subset H_{12}^-$.
\\
We have now the hypothesis $\mathcal H_0: \{S_1, S_2 \subset H_{12}^-\}$, then\\\
\begin{tabular}{lllll}
 $\forall \theta \in  S_1 $&$\left\{
\begin{aligned}
&  s_1(\theta) \le 0,\\
& h_{12}(\theta) < 0,\\
\end{aligned} 
\right.$&$
\begin{aligned}
& [\text{by Definition \ref{def-defS}}]\\
& [\text{by } \mathcal{H}_0, \text{Definitions \ref{app:func_h} and \ref{def-defS}}]
\end{aligned} 
$
&$\Rightarrow \theta \in S_2$ &$\Rightarrow S_1 \subset S_2$.\\
\end{tabular}\\
Similarly, it is easy to show that  $S_2 \subset S_1\Leftrightarrow S_1, S_2 \subset H_{12}^+$.
\end{proof}
\begin{lemma}
\label{lemma:separation}
$S_1\bigcap S_2 = \emptyset \Leftrightarrow H_{12}$ is a separating hyperplane of $S_1$ and $S_2$.
\end{lemma}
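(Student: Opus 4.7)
The plan is to prove both directions by tracking the sign of $h_{12}=s_2-s_1$ on each set, mirroring the style of the proof of Lemma~\ref{lemma:inclusion}. Since both $S_1$ and $S_2$ are defined by non-strict inequalities (Definition~\ref{def-defS}) while $H_{12}^{+}$ and $H_{12}^{-}$ are open half-spaces (Definition~\ref{app:func_h}), the key observation is that the emptiness of the intersection can be upgraded to a strict sign statement on each set.

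For the forward implication, I would take an arbitrary $\theta \in S_1$ and combine two facts: $s_1(\theta)\le 0$ by membership in $S_1$, and $s_2(\theta)>0$ because $\theta\notin S_2$. Subtracting gives $h_{12}(\theta)=s_2(\theta)-s_1(\theta)>0$, hence $\theta\in H_{12}^{+}$. This proves $S_1\subset H_{12}^{+}$, and the symmetric argument (swapping the roles of $S_1$ and $S_2$) yields $S_2\subset H_{12}^{-}$, which is exactly the assertion that $H_{12}$ separates $S_1$ and $S_2$.

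For the reverse implication, assuming without loss of generality that $S_1\subset H_{12}^{+}$ and $S_2\subset H_{12}^{-}$, any common point would have to lie in $H_{12}^{+}\cap H_{12}^{-}$, which is empty since these two half-spaces are defined by strict inequalities of opposite sign. Hence $S_1\cap S_2=\emptyset$.

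I do not expect a real obstacle: the only delicate point is recognizing that the paper's convention of open half-spaces forces the separation to be strict, which is precisely what rules out the degenerate configuration of two sets that are disjoint in their interiors but share a single boundary point. Once this is noted, both implications reduce to one-line computations with no convexity argument needed beyond what is already encoded in Definition~\ref{def-defS}.
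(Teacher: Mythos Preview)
Your proof is correct and follows essentially the same approach as the paper's own proof: both directions are handled by the same sign-tracking of $h_{12}=s_2-s_1$ on each set, with the paper merely presenting the two implications in the opposite order. Your explicit remark about the half-spaces being open is a helpful clarification, but it does not depart from the paper's argument.
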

\begin{proof}
We have the hypothesis $\mathcal{H}_0:\{S_1~\subset~ H_{12}^+,\, S_2~\subset~ H_{12}^-\}$. Thus, $H_{12}$ is a separating hyperplane of $S_1$ and $S_2$ then, according to its definition, $S_1\bigcap S_2 = \emptyset$.
\\
We have now the hypothesis $\mathcal{H}_0:\{S_1\bigcap S_2 = \emptyset\}$ then\\
\begin{tabular}{llll}
 $\forall \theta \in S_1$ &$\left\{
\begin{aligned}
& s_1(\theta) \le 0\,,\\
&s_2(\theta) > 0\,,\\
\end{aligned} 
\right.$&$
\begin{aligned}
& [\text{by Definition \ref{def-defS}}]\\
& [\text{by }\mathcal{H}_0, \text{ Definition \ref{def-defS}}]
\end{aligned} 
$
&$\Rightarrow \theta \in H_{12}^+.$\\
$\forall \theta \in S_2$ &$\left\{
\begin{aligned}
& s_1(\theta) > 0\,,\\
&s_2(\theta) \le 0\,,\\
\end{aligned} 
\right.$&$
\begin{aligned}
& [\text{by }\mathcal{H}_0, \text{ Definition \ref{def-defS}}]\\
& [\text{by Definition \ref{def-defS}}]
\end{aligned} 
$ &$\Rightarrow \theta \in H_{12}^-.$\\
\end{tabular}\\
Consequently, $H_{12}$ is  a separating hyperplane of  $S_1$ and $S_2$.
\end{proof}
\begin{proposition}
To detect set inclusion $S_1 \subset S_2$ and emptiness of set intersection $S_1 \bigcap S_2$, it is necessary:
\begin{enumerate}
\item build the hyperplane $H_{12}$;
\item apply the $\halfspace$ operator for couples $(S_1,H_{12})$ and $(S_2,H_{12})$ to know in which half-space(s) $S_1$ and $S_2$ are located;
\item check the conditions in the Lemmas \ref{lemma:inclusion} and \ref{lemma:separation}.
\end{enumerate}  
\end{proposition}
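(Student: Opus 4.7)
The plan is to verify that the three-step procedure correctly decides both inclusion ($S_1\subset S_2$ or $S_2\subset S_1$) and emptiness of $S_1\cap S_2$, by chaining Definition~\ref{append:proposition} with Lemmas~\ref{lemma:inclusion} and~\ref{lemma:separation}. Step~(1) is an explicit construction from Definition~\ref{app:func_h}: given $s_1$ and $s_2$, one obtains $h_{12}=s_2-s_1$ and $H_{12}=h_{12}^{-1}(\{0\})$ directly. Under the normalized form at the opening of Appendix~\ref{append:IntersectionSsets}, the convex parts $a(\theta)$ of $s_1$ and $s_2$ coincide (they arise from the same atomic log-likelihood $\omega$), so they cancel in $h_{12}$, which is therefore affine, and $H_{12}$ is a genuine affine hyperplane of $\mathbb{R}^p$.

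The crux is step~(2): that $\halfspace(S,H_{12})$ indeed returns the half-spaces meeting $S$. For $S$ convex and $h_{12}$ affine, the image $h_{12}(S)$ is an interval of $\mathbb{R}$, so $S$ lies on one side of $H_{12}$ iff this interval lies on one side of $0$, which is decidable from the two extrema $\min_S h_{12}$ and $\max_S h_{12}$. Definition~\ref{append:proposition} implements an economical version: the minimizer $\theta_1$ of $s$ on $S$ pins down one side of $H_{12}$, and only the opposite extremum of $h_{12}$ over $S$ is then computed; if it falls on the same side as $\theta_1$, the whole interval $h_{12}(S)$ sits there and $S$ is contained in the corresponding half-space, otherwise $S$ crosses $H_{12}$ and the output is $\{H^+,H^-\}$.

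Step~(3) is then a finite case analysis on the pair of outputs:
\begin{itemize}
\item both equal to $\{H^-\}$: Lemma~\ref{lemma:inclusion} gives $S_1\subset S_2$;
\item both equal to $\{H^+\}$: the symmetric part of Lemma~\ref{lemma:inclusion} gives $S_2\subset S_1$;
\item one is $\{H^+\}$ and the other $\{H^-\}$: Lemma~\ref{lemma:separation} gives $S_1\cap S_2=\emptyset$;
\item any remaining configuration involves a set straddling $H_{12}$, which by Lemma~\ref{lemma:separation} prevents $H_{12}$ from separating $S_1$ and $S_2$ and by Lemma~\ref{lemma:inclusion} rules out strict inclusion, so the procedure correctly reports that none of the three conditions hold.
\end{itemize}

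The main obstacle is the correctness (and tractability) of step~(2): one must justify that $h_{12}$ is affine and that $\argmin_S h_{12}$ and $\argmax_S h_{12}$ can be obtained from convex programs over the sublevel set $S$, so that inspecting just $\theta_1$ and $\theta_2$ is sufficient to conclude about the whole continuous set $S$. Once this is secured, the rest is a straightforward matching of the three possible $\halfspace$ outputs against the hypotheses of the two lemmas.
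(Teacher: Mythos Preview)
The paper gives no proof of this proposition; it is stated as an immediate procedural corollary of Lemmas~\ref{lemma:inclusion} and~\ref{lemma:separation}, with the correctness of the $\halfspace$ operator taken for granted from Definition~\ref{append:proposition}. Your write-up therefore supplies strictly more than the paper does, and the chain you propose---affinity of $h_{12}$ from the cancellation of the common convex part $a(\cdot)$, convexity of $S$ to reduce the sidedness question to the two extrema of $h_{12}$ on $S$, and then the case split against the two lemmas---is the intended logic and is correct.

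Two small refinements. First, in your case analysis the orientation matters: Lemma~\ref{lemma:separation} yields $S_1\cap S_2=\emptyset$ precisely when $S_1\subset H_{12}^{+}$ and $S_2\subset H_{12}^{-}$ (this is the direction established in its proof, since $h_{12}=s_2-s_1$), not for an arbitrary assignment of one set to each side; you should state the case accordingly. Second, the cancellation of $a(\cdot)$ in $h_{12}$ relies on the normalized form of the Remark at the start of Appendix~\ref{append:IntersectionSsets}, which is explicit for the Gaussian model and holds more generally whenever the atomic cost is written in canonical exponential-family form; it is worth flagging that this is the assumption under which $H_{12}$ is a genuine affine hyperplane, as the paper does implicitly by calling it one.
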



\section{Proof of Proposition \ref{prop_solution_rect}}
\label{proof_prop_solution_rect}

\begin{proof}
Let $\mathbf{c} = \{\mathbf{c}^k\}_{k=
1,\dots,p}$ be the minimal point of $S$, defined as in (\ref{c}). In the intersection case, we consider solving the optimization problem (\ref{inclusionOptim}) for the boundaries $\tilde{l}^k$ and $\tilde{r}^k$, removing constraint $l^k \le \theta^k \le r^k$. If $R$ intersects $S$, the optimal solution $\theta^k$ belongs to the boundary of $S$ due to our simple (axis-aligned rectangular) inequality constraints and we get 
\begin{equation}
\label{KKT}
 	s^k(\theta^k) = -\sum_{ j\neq k}s^j(\theta^j)+ \Delta\,.
\end{equation}
We are looking for minimum and maximum values in $\theta^k$ for this equation with constraints $l^j\le \theta^j \le r^j$ ($j \ne k$). Using the convexity of $s^k$ and $s^j$, we need to maximize the quantity in the right-hand side. Thus, the solution $\tilde{\theta}^j$ for each $\theta^j$ is the minimal value of $\sum_{j\neq k} s^j(\theta^j)$ under constraint $l^j\le \theta^j \le r^j$ and the result can only be $l^j$, $r^j$ or $\mathbf{c}^j$. Looking at all coordinates at the same time, the values for $\tilde{\theta}\in \mathbb{R}^p$ corresponds to the closest point $\mathbf{m} =  \{\mathbf{m}^k\}_{k=1,\dots,p}$. Having found $\theta^{k_1}$ and $\theta^{k_2}$ using $\tilde{\theta}$ the result in (\ref{updateIntersection}) is obvious considering current boundaries $l^k$ and $r^k$.\\
In exclusion case, we remove from $R$ the biggest possible rectangle included into $S \bigcap \{l^j\le \theta^j \le r^j\,,\, j \ne k\}$, which correspond to minimizing the right hand side of (\ref{KKT}), that is maximizing $\sum_{j\neq k} s^j(\theta^j)$ under constraint $l^j\le \theta^j \le r^j$ ($j \ne k$). In that case, the values for $\tilde{\theta}$ correspond to the greatest value returned by  $\sum_{j\neq k} s^j(\theta^j)$ on interval boundaries. With convex functions $s^j$, it corresponds to the farthest point $\mathbf{M} =  \{\mathbf{M}^k\}_{k=1,\dots, p}$.
\end{proof}

\section{Optimization Strategies for GeomFPOP(R-type)}
\label{OptStr}
In  GeomFPOP(R-type) at each iteration, we need to consider all past and future spheres of change $i$. As it was said in Section \ref{sec-study}, in practice it is often sufficient to consider just a few of them to get an empty set. Thus, we propose to limit the number of operations $\interR$ to no more than two: 
    \begin{itemize}
      \item $\mathtt{last.}$   At time $t$ we update hyperrectangle by only one operation, this is an intersection with the last S-type set $S^i_t $ from $ \mathcal{F}^i(t)$.
      \item $\mathtt{random.}$ At time $t$ we update the hyperrectangle by only two operations. First, this is an intersection with the last S-type set $S^i_t$ from $\mathcal{F}^i(t)$, and second, this is an intersection with other random  S-type set from $ \mathcal{F}^i(t)$.
  \end{itemize}
 We limit the number of operations $\exclR$ no more than one:
    \begin{itemize}
      \item $\mathtt{empty.}$ At time $t$ we do not perform $\exclR$ operations.
      \item $\mathtt{random.}$ At time $t$ we update hyperrectangle by only one operation: exclusion with a random  S-type set from $ \mathcal{P}^i$.
  \end{itemize} 
  According to these notations, the approach presented in the original GeomFPOP (R-type) has the form $(\mathtt{all / all})$. We show the impact of introduced limits on the number of change point candidates retained over time and evaluate their run times. The results are presented in Figures \ref{StrategiesAll} and \ref{RplotTCoptStrAll}.
\begin{figure}[!ht]
 	\begin{center} {\includegraphics[width=1\linewidth]{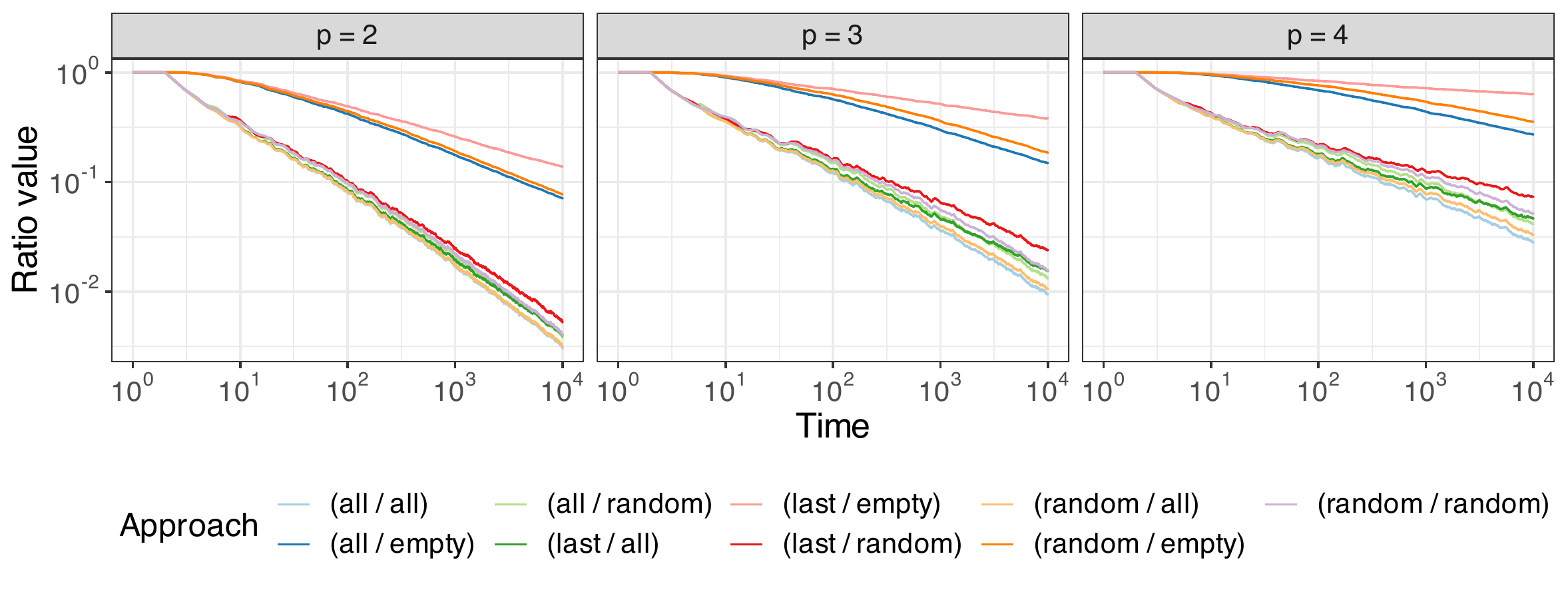}}\end{center}
 	\caption{ Ratio number of candidate change point over time by different optimization approaches of GeomFPOP (R-type) in dimension $p = 2,3$ and $4$. Averaged over $100$ data sets without changes with $10^4$ data points.}
 	\label{StrategiesAll}
 \end{figure}

 \begin{figure}[!ht] 
 	\begin{center} {\includegraphics[width=1\linewidth]{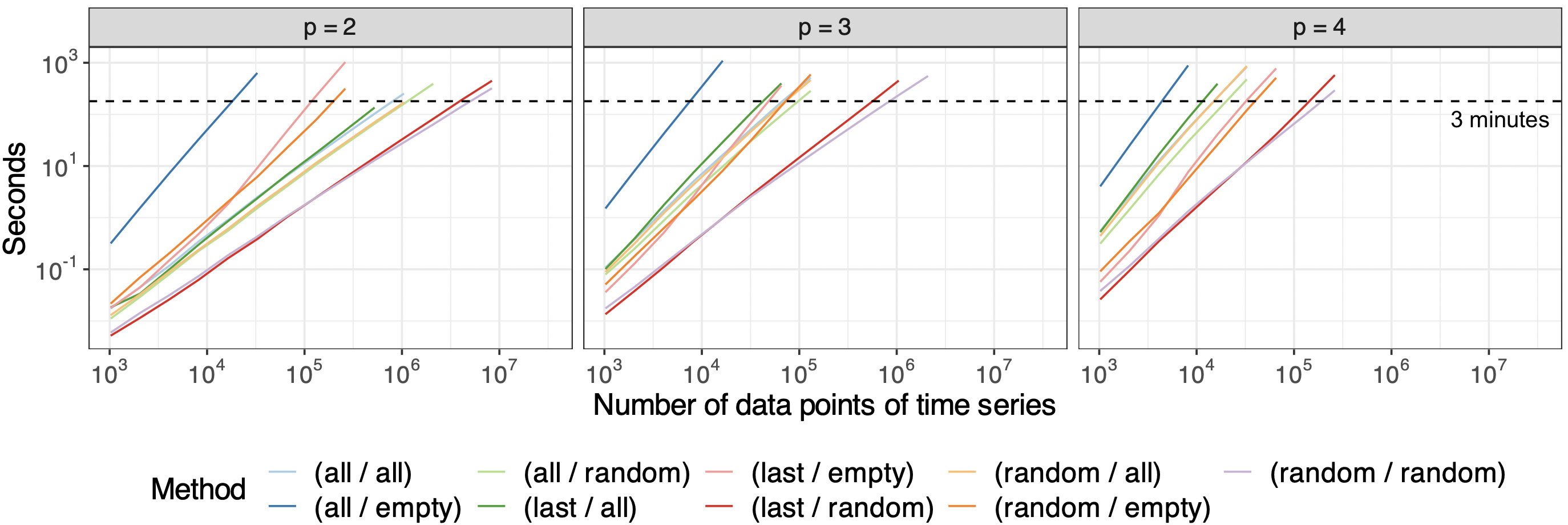}}\end{center}
 	\caption{Run time of different optimization approaches of GeomFPOP (R-type) using multivariate time series without change points. The maximum run time of the algorithms is 3 minutes. Averaged over $100$ data sets.}
 	\label{RplotTCoptStrAll}
 \end{figure}
Even though the $\mathtt{(random/random)}$ approach reduces the quality of pruning in dimensions $p=2,3$ and $4$, it
 gives a significant gain in the run time compared to the original GeomFPOP (R-type)  and is at least comparable to the $\mathtt{(last/random)}$ approach. 


\section{The number of change point candidates in time: GeomFPOP vs. PELT}
\label{sec:GeomFPOPvsPELT}

In this appendix we compare the square of the number of candidates stored by GeomFPOP (S and R-type) to the corresponding number of candidates stored by PELT over time. Indeed, the complexity of GeomFPOP at each time step is a function of the square of the number of candidates, while, for PELT, of the number of candidates (see Section \ref{TCsmall}). Figure \ref{fig-Figure13} shows the ratios of these computed quantities for dimension $2\le p \le 10$. It is noteworthy that for both S-type and R-type for $p=2$ this ratio is almost always less than 1 and decreases with time. This is coherent with the fact that GeomFPOP is faster than PELT (see Figure 6). At $p=3$ for the R-type this ratio is approximately 1, while for the S-type it is greater than 1 and continues to increase with increasing $t$ value. For sizes $3<p\le 10$, this ratio remains consistently greater than 1 for both S-type and R-type, showing a continuous increasing trend with time. This is coherent with the fact that GeomFPOP is almost as fast as PELT for $p=3$ and slower than PELT for $p \leq 4$ (see Figure \ref{fig-Figure6}).

\begin{figure}[!ht]
 	\begin{center} {\includegraphics[width=0.9\linewidth]{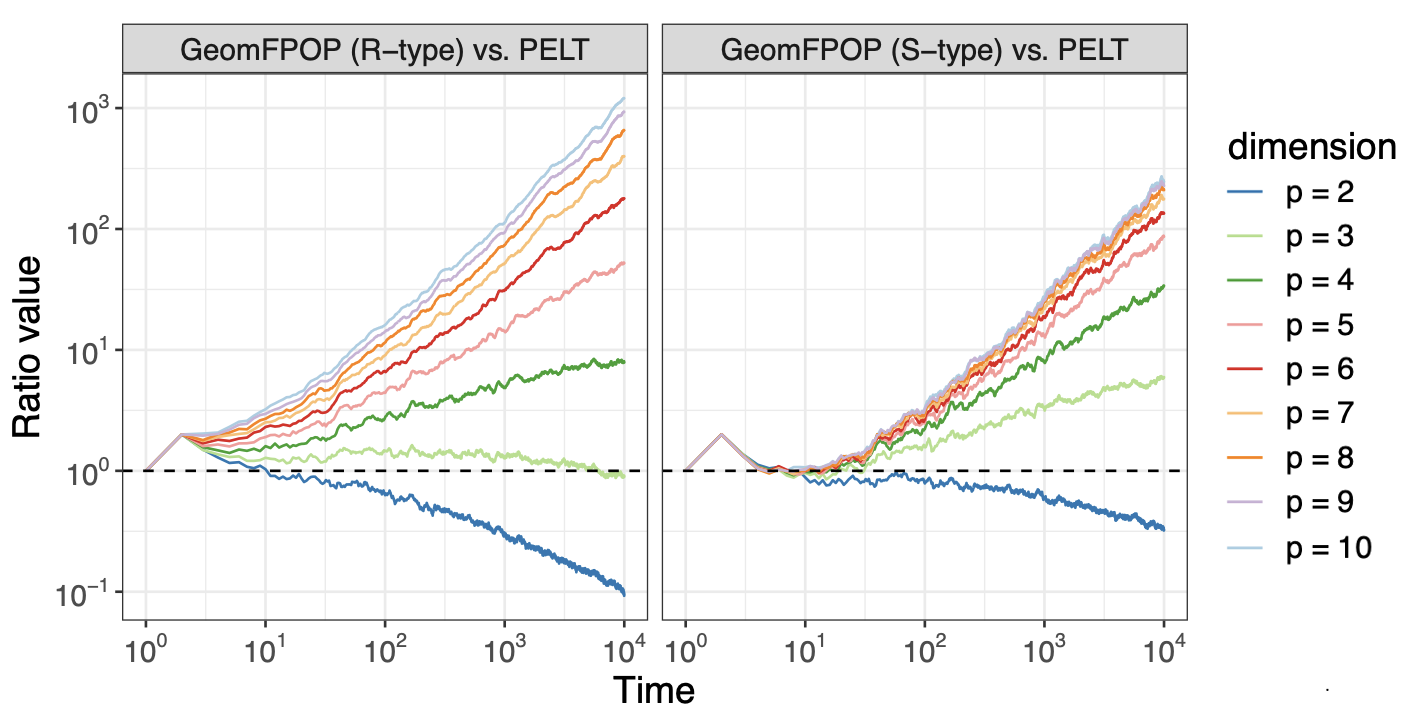}}\end{center}
 	\caption{The ratio of the square of the number of candidates stored in GeomFPOP to the number of candidates stored in PELT over time. The horizontal black line corresponds to the value 1.}
\label{fig-Figure13}
\end{figure}

\section{Run time of the algorithm by multivariate time series with changes in subset of dimension.} 
\label{Robustness}

We expect GeomFPOP $\mathtt{(random/random)}$ to be slightly less effective (but no worse than in the absence of changes) if changes are only present in a subset of dimensions. To this end, in this appendix for dimension $2\le p\le 4$ we examine the run time of GeomFPOP $\mathtt{(random/random)}$ as in Section \ref{Run_time_segment_nb} (see Figure \ref{fig-Figure9}) but removing all changes in the last $k$ dimensions (with $k = 0,\dots, p-1$). The results are presented in Figure \ref{fig-Figure12}. There are two regimes. For a small number of segments (the threshold between small and large numbers of segments is around $2\times 10^3$ for all considered dimensions $p$),
  the run time decreases with the number of segments and the difference between the run time of GeomFPOP $\mathtt{(random/random)}$ for $k=0$ (this case corresponds to changes in all dimensions) and $k>0$ is very small. For larger number of segments, the run time increases with the number of segments, as in Section \ref{Run_time_segment_nb} and also increases with $k$. Importantly, in this regime the run time is never lower than for $1$ segment.
\begin{figure}[ht]
 	\begin{center} {\includegraphics[width=1\linewidth]{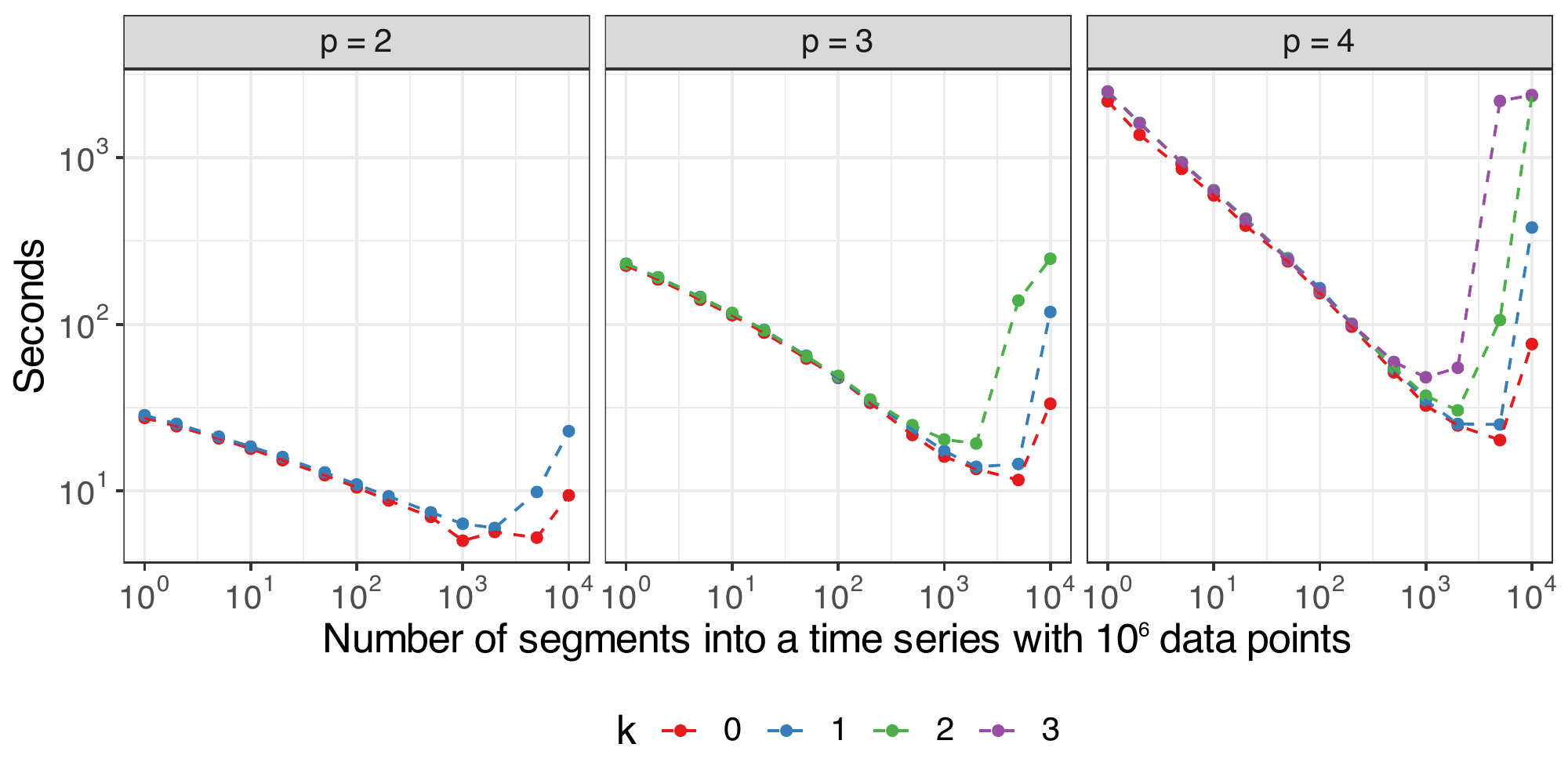}}\end{center}
 	\caption{Dependence of the run time of the $\mathtt{(random/random)}$ approach of GeomFPOP (R-type)  on the number of segments in a $p$-variable time series with $10^6$ data points where all changes in the last $k$ dimensions have been removed.}
 	\label{fig-Figure12}
 \end{figure}

\bibliography{Bibliography}

\begin{thebibliography}{10}
\expandafter\ifx\csname url\endcsname\relax
  \def\url#1{\texttt{#1}}\fi
\expandafter\ifx\csname urlprefix\endcsname\relax\def\urlprefix{URL }\fi
\expandafter\ifx\csname href\endcsname\relax
  \def\href#1#2{#2} \def\path#1{#1}\fi

\bibitem{NRCreport2013}
C.~Data, C.~Statistics, B.~Applications, D.~Sciences, N.~Council, Frontiers in massive data analysis, The National Academies Press., 2013.
\newblock \href {https://doi.org/10.17226/18374} {\path{doi:10.17226/18374}}.

\bibitem{olshen2004circular}
A.~Olshen, E.~Venkatraman, R.~Lucito, M.~Wigler, Circular binary segmentation for the analysis of array-based dna copy number data., Biostatistics (Oxford, England) 5 (2004) 557--72.
\newblock \href {https://doi.org/10.1093/biostatistics/kxh008} {\path{doi:10.1093/biostatistics/kxh008}}.

\bibitem{Picard2005}
F.~Picard, S.~Robin, M.~Lavielle, C.~Vaisse, J.-J. Daudin, \href{https://hal.archives-ouvertes.fr/hal-01222433}{A statistical approach for array cgh data analysis}, BMC Bioinformatics 6 (2005) np.
\newblock \href {https://doi.org/10.1186/1471-2105-6-27} {\path{doi:10.1186/1471-2105-6-27}}.
\newline\urlprefix\url{https://hal.archives-ouvertes.fr/hal-01222433}

\bibitem{bai2003computation}
J.~Bai, P.~Perron, Computation and analysis of multiple structural-change., Journal of Applied Econometrics 18 (01 2003).

\bibitem{Aue_monitoring}
A.~Aue, L.~Horváth, M.~Hušková, P.~Kokoszka, \href{http://www.jstor.org/stable/23114925}{Change-point monitoring in linear models.}, The Econometrics Journal 9~(3) (2006) 373--403.
\newline\urlprefix\url{http://www.jstor.org/stable/23114925}

\bibitem{Bosc2003}
M.~Bosc, F.~Heitz, J.-P. Armspach, I.~Namer, D.~Gounot, L.~Rumbach, \href{https://www.sciencedirect.com/science/article/pii/S1053811903004063}{Automatic change detection in multimodal serial mri: application to multiple sclerosis lesion evolution.}, NeuroImage 20(2) (2003) 643--656\href {https://doi.org/https://doi.org/10.1016/S1053-8119(03)00406-3} {\path{doi:https://doi.org/10.1016/S1053-8119(03)00406-3}}.
\newline\urlprefix\url{https://www.sciencedirect.com/science/article/pii/S1053811903004063}

\bibitem{Staudacher2005ANM}
M.~Staudacher, S.~Telser, A.~Amann, H.~Hinterhuber, M.~Ritsch-Marte, A new method for change-point detection developed for on-line analysis of the heart beat variability during sleep., Physica A-statistical Mechanics and Its Applications 349 (2005) 582--596.

\bibitem{Malladi2013OnlineBC}
R.~Malladi, G.~P. Kalamangalam, B.~Aazhang, Online bayesian change point detection algorithms for segmentation of epileptic activity., 2013 Asilomar Conference on Signals, Systems and Computers (2013) 1833--1837.

\bibitem{Reeves2007}
J.~Reeves, J.~Chen, X.~L. Wang, R.~Lund, Q.~Q. Lu, \href{https://journals.ametsoc.org/view/journals/apme/46/6/jam2493.1.xml}{A review and comparison of changepoint detection techniques for climate data.}, Journal of Applied Meteorology and Climatology 46~(6) (2007) 900 -- 915.
\newblock \href {https://doi.org/10.1175/JAM2493.1} {\path{doi:10.1175/JAM2493.1}}.
\newline\urlprefix\url{https://journals.ametsoc.org/view/journals/apme/46/6/jam2493.1.xml}

\bibitem{DucrRobitaille2003}
J.-F. Ducr{\'e}-Robitaille, L.~A. Vincent, G.~Boulet, Comparison of techniques for detection of discontinuities in temperature series., International Journal of Climatology 23 (2003).

\bibitem{Killick}
R.~Killick, P.~Fearnhead, I.~A. Eckley, Optimal detection of changepoints with a linear computational cost., Journal of the American Statistical Association 107~(500) (2012) 1590--1598.
\newblock \href {http://arxiv.org/abs/https://doi.org/10.1080/01621459.2012.737745} {\path{arXiv:https://doi.org/10.1080/01621459.2012.737745}}.

\bibitem{Naoki2010}
I.~Naoki, J.~Kurths, Change-point detection of climate time series by nonparametric method., Lecture Notes in Engineering and Computer Science 2186 (10 2010).

\bibitem{Andreou}
E.~Andreou, E.~Ghysels, \href{http://www.jstor.org/stable/4129273}{Detecting multiple breaks in financial market volatility dynamics.}, Journal of Applied Econometrics 17~(5) (2002) 579--600.
\newline\urlprefix\url{http://www.jstor.org/stable/4129273}

\bibitem{Fryzlewicz_2014}
P.~Fryzlewicz, \href{https://doi.org/10.1214%2F14-aos1245}{Wild binary segmentation for multiple change-point detection.}, The Annals of Statistics 42~(6) (dec 2014).
\newblock \href {https://doi.org/10.1214/14-aos1245} {\path{doi:10.1214/14-aos1245}}.
\newline\urlprefix\url{https://doi.org/10.1214%2F14-aos1245}

\bibitem{galceran2017multipolicy}
E.~Galceran, A.~Cunningham, R.~Eustice, E.~Olson, Multipolicy decision-making for autonomous driving via changepoint-based behavior prediction: Theory and experiment., Autonomous Robots 41 (08 2017).
\newblock \href {https://doi.org/10.1007/s10514-017-9619-z} {\path{doi:10.1007/s10514-017-9619-z}}.

\bibitem{Rybach}
D.~Rybach, C.~Gollan, R.~Schluter, H.~Ney, Audio segmentation for speech recognition using segment features., in: 2009 IEEE International Conference on Acoustics, Speech and Signal Processing, 2009, pp. 4197--4200.
\newblock \href {https://doi.org/10.1109/ICASSP.2009.4960554} {\path{doi:10.1109/ICASSP.2009.4960554}}.

\bibitem{Radke}
R.~Radke, S.~Andra, O.~Al-Kofahi, B.~Roysam, Image change detection algorithms: a systematic survey., IEEE Transactions on Image Processing 14~(3) (2005) 294--307.
\newblock \href {https://doi.org/10.1109/TIP.2004.838698} {\path{doi:10.1109/TIP.2004.838698}}.

\bibitem{Davis2006}
R.~A. Davis, T.~C. Lee, G.~A. Rodriguez-Yam, \href{https://EconPapers.repec.org/RePEc:bes:jnlasa:v:101:y:2006:p:223-239}{Structural break estimation for nonstationary time series models.}, Journal of the American Statistical Association 101 (2006) 223--239.
\newline\urlprefix\url{https://EconPapers.repec.org/RePEc:bes:jnlasa:v:101:y:2006:p:223-239}

\bibitem{ranganathan2012pliss}
A.~Ranganathan, \href{https://doi.org/10.1007/s10514-012-9273-4}{Pliss: Labeling places using online changepoint detection.}, Auton. Robots 32~(4) (2012) 351–368.
\newblock \href {https://doi.org/10.1007/s10514-012-9273-4} {\path{doi:10.1007/s10514-012-9273-4}}.
\newline\urlprefix\url{https://doi.org/10.1007/s10514-012-9273-4}

\bibitem{jewell2020fast}
S.~Jewell, P.~Fearnhead, D.~Witten, \href{https://arxiv.org/abs/1910.04291}{Testing for a change in mean after changepoint detection.} (2019).
\newblock \href {https://doi.org/10.48550/ARXIV.1910.04291} {\path{doi:10.48550/ARXIV.1910.04291}}.
\newline\urlprefix\url{https://arxiv.org/abs/1910.04291}

\bibitem{Yao}
Y.-C. Yao, \href{https://doi.org/10.1214/aos/1176346802}{Estimation of a noisy discrete-time step function: Bayes and empirical bayes approaches.}, The Annals of Statistics 12~(4) (1984) 1434--1447.
\newblock \href {https://doi.org/10.1214/aos/1176346802} {\path{doi:10.1214/aos/1176346802}}.
\newline\urlprefix\url{https://doi.org/10.1214/aos/1176346802}

\bibitem{Lebarbier2005}
E.~Lebarbier, Detecting multiple change-points in the mean of gaussian process by model selection., Signal Processing 85 (2005) 717--736.
\newblock \href {https://doi.org/10.1016/j.sigpro.2004.11.012} {\path{doi:10.1016/j.sigpro.2004.11.012}}.

\bibitem{harchaoui2010multiple}
Z.~Harchaoui, C.~Lévy-Leduc, \href{http://www.jstor.org/stable/27920180}{Multiple change-point estimation with a total variation penalty.}, Journal of the American Statistical Association. 105~(492) (2010) 1480--1493.
\newline\urlprefix\url{http://www.jstor.org/stable/27920180}

\bibitem{Frick2013}
K.~Frick, A.~Munk, H.~Sieling, \href{https://arxiv.org/abs/1301.7212}{Multiscale change-point inference.} (2013).
\newblock \href {https://doi.org/10.48550/ARXIV.1301.7212} {\path{doi:10.48550/ARXIV.1301.7212}}.
\newline\urlprefix\url{https://arxiv.org/abs/1301.7212}

\bibitem{fryzlewicz2020detecting}
A.~Anastasiou, P.~Fryzlewicz, Detecting multiple generalized change-points by isolating single ones., Metrika 85 (02 2022).
\newblock \href {https://doi.org/10.1007/s00184-021-00821-6} {\path{doi:10.1007/s00184-021-00821-6}}.

\bibitem{truong2020selective}
C.~Truong, L.~Oudre, N.~Vayatis, Selective review of offline change point detection methods, Signal Processing 167 (2020) 107299.

\bibitem{aminikhanghahi2017survey}
S.~Aminikhanghahi, D.~J. Cook, A survey of methods for time series change point detection., Knowledge and information systems 51~(2) (2017) 339--367.

\bibitem{Auger}
I.~E. Auger, C.~E. Lawrence, \href{https://doi.org/10.1007/BF02458835}{Algorithms for the optimal identification of segment neighborhoods.}, Bulletin of Mathematical Biology 51~(1) (1989) 39--54.
\newblock \href {https://doi.org/10.1007/BF02458835} {\path{doi:10.1007/BF02458835}}.
\newline\urlprefix\url{https://doi.org/10.1007/BF02458835}

\bibitem{jackson2005algorithm}
B.~Jackson, J.~D. Scargle, D.~Barnes, S.~Arabhi, A.~Alt, P.~Gioumousis, E.~Gwin, P.~Sangtrakulcharoen, L.~Tan, T.~T. Tsai, An algorithm for optimal partitioning of data on an interval, IEEE Signal Processing Letters 12~(2) (2005) 105--108.

\bibitem{Rigaill2010}
G.~Rigaill, \href{http://www.numdam.org/item/JSFS_2015__156_4_180_0/}{A pruned dynamic programming algorithm to recover the best segmentations with $1$ to $k_{max}$ change-points}, Journal de la soci\'et\'e fran\c{c}aise de statistique 156~(4) (2015) 180--205.
\newline\urlprefix\url{http://www.numdam.org/item/JSFS_2015__156_4_180_0/}

\bibitem{Maidstone}
R.~Maidstone, T.~Hocking, G.~Rigaill, P.~Fearnhead, On optimal multiple changepoint algorithms for large data., Statistics and Computing 27~(2) (2017) 519--533.

\bibitem{lavielle2000least}
M.~Lavielle, E.~Moulines, Least-squares estimation of an unknown number of shifts in a time series., Journal of time series analysis 21~(1) (2000) 33--59.

\bibitem{fearnhead2018detecting}
P.~Fearnhead, R.~Maidstone, A.~Letchford, Detecting changes in slope with an l0 penalty., Journal of Computational and Graphical Statistics (2018) 1--11.

\bibitem{lai2005comparative}
W.~R. Lai, M.~D. Johnson, R.~Kucherlapati, P.~J. Park, Comparative analysis of algorithms for identifying amplifications and deletions in array cgh data., Bioinformatics 21~(19) (2005) 3763--3770.

\bibitem{liehrmann2021increased}
A.~Liehrmann, G.~Rigaill, T.~D. Hocking, Increased peak detection accuracy in over-dispersed chip-seq data with supervised segmentation models, BMC bioinformatics 22~(1) (2021) 1--18.

\bibitem{runge2020finite}
V.~Runge, Is a finite intersection of balls covered by a finite union of balls in euclidean spaces?, Journal of Optimization Theory and Applications 187~(2) (2020) 431--447.

\bibitem{Zhang2007}
N.~Zhang, D.~Siegmund, A modified bayes information criterion with applications to the analysis of comparative genomic hybridization data., Biometrics 63 (2007) 22--32.
\newblock \href {https://doi.org/10.1111/j.1541-0420.2006.00662.x} {\path{doi:10.1111/j.1541-0420.2006.00662.x}}.

\bibitem{Verzelen2020}
N.~Verzelen, M.~Fromont, M.~Lerasle, P.~Reynaud-Bouret, \href{https://arxiv.org/abs/2010.11470}{Optimal change-point detection and localization.} (2020).
\newblock \href {https://doi.org/10.48550/ARXIV.2010.11470} {\path{doi:10.48550/ARXIV.2010.11470}}.
\newline\urlprefix\url{https://arxiv.org/abs/2010.11470}

\bibitem{Hampel1974}
F.~R. Hampel, \href{http://www.jstor.org/stable/2285666}{The influence curve and its role in robust estimation}, Journal of the American Statistical Association 69~(346) (1974) 383--393.
\newline\urlprefix\url{http://www.jstor.org/stable/2285666}

\bibitem{Hall1990}
P.~Hall, J.~W. Kay, D.~M. Titterington, \href{http://www.jstor.org/stable/2336990}{Asymptotically optimal difference-based estimation of variance in nonparametric regression}, Biometrika 77~(3) (1990) 521--528.
\newline\urlprefix\url{http://www.jstor.org/stable/2336990}

\bibitem{Lavielle2001AnAO}
M.~Lavielle, {\'E}.~Lebarbier, \href{https://api.semanticscholar.org/CorpusID:9866087}{An application of mcmc methods for the multiple change-points problem}, Signal Processing 81 (2001) 39--53.
\newline\urlprefix\url{https://api.semanticscholar.org/CorpusID:9866087}

\bibitem{liehrmann2023DiffSegR}
A.~Liehrmann, E.~Delannoy, A.~Launay-Avon, E.~Gilbault, O.~Loudet, B.~Castandet, G.~Rigaill, \href{https://hal.inrae.fr/hal-04282211}{{DiffSegR: an RNA-seq data driven method for differential expression analysis using changepoint detection}}, {NAR Genomics and Bioinformatics} 5~(4) (2023) lqad098.
\newblock \href {https://doi.org/10.1093/nargab/lqad098} {\path{doi:10.1093/nargab/lqad098}}.
\newline\urlprefix\url{https://hal.inrae.fr/hal-04282211}

\end{thebibliography}

\end{document}